\def \version {September 18, 2013}
\def \dbigcup {\displaystyle\bigcup }
\def \dsum {\displaystyle\sum }
\def \limits { }
\newtheorem{theorem}{Theorem}
\newtheorem{claim}[theorem]{Claim}
\newtheorem{conjecture}[theorem]{Conjecture}
\newtheorem{definition}[theorem]{Definition}
\newtheorem{lemma}[theorem]{Lemma}
\newenvironment{proof}[1][Proof]{\noindent\textbf{#1.} }{\ \rule{0.5em}{0.5em}}
\begin{document}

\title{Speeding up deciphering by hypergraph ordering}
\author{Peter Horak $^{1,}$\thanks{%
~Supported by a SPIRE grant from University of Bergen, Norway, and by a
research grant from IAS, University of Washington.} \qquad \vspace{2ex}Zsolt
Tuza $^{2,3,}$\thanks{%
~Supported in part by the Hungarian Scientific Research Fund, OTKA grant
T-81493.} \\
{\normalsize $^1$ University of Washington, Tacoma, USA }\\
{\normalsize $^2$ Alfr\'{e}d R\'{e}nyi Institute of Mathematics, Hungarian
Academy of Sciences }\\
{\normalsize $^3$ University of Pannonia, Veszpr\'em, Hungary {\ }}}
\date{{\small Latest update on \version}}
\maketitle

\begin{abstract}
The ``Gluing Algorithm'' of Semaev [Des.\ Codes Cryptogr.\ 49 (2008),
47--60] ---  that finds all solutions of a sparse system of linear equations
over the Galois  field $GF(q)$ --- has average running time  $O(mq^{\max
\left\vert \cup _{1}^{k}X_{j}\right\vert -k}), $  where $m$ is the total
number of equations, and $\cup _{1}^{k}X_{j}$  is the set of all unknowns
actively occurring in the first $k$ equations. Our goal here is to minimize
the exponent of $q$ in the case where every  equation contains at most three
unknowns. 
The main result states that if the total number  $\left\vert \cup
_{1}^{m}X_{j}\right\vert$ of unknowns is equal to $m$,  then the best
achievable exponent is between $c_1m$ and $c_2m$  for some positive
constants $c_1$ and $c_2.$
\end{abstract}

\section{Introduction}

Sparse objects such as sparse matrices, sparse system of (non-)linear
equations occur frequently in science or engineering. For example, huge
sparse matrices often appear when solving partial differential equations. It
seems that \cite{T} \ was the first monograph on the subject, see \cite{P}
for a more a recent one, and \cite{Saad} for a monograph on solving sparse
linear systems of equations.

Nowadays sparse systems are frequently studied in algebraic cryptoanalysis.
First, given a cipher system, one converts it into a system of equations.
Second, the system of equations is solved to retrieve either a key or a
plaintext. As pointed in \cite{C}, this system of equations will be sparse,
since efficient implementations of real-word systems require a low gate
count. Also, as mentioned in \cite{Bardet}, the cryptanalysis of several
modern ciphers reduces to finding the common zeros of $m$ quadratic
polynomials in $n$ unknowns over $F_{2}.$ In the paper \cite{Bardet} an
algorithm reducing the problem to a combination of exhaustive search and
sparse linear algebra in given.

There are plenty of papers on methods for solving a sparse system of
equations. In \cite{Igor2} a so called Gluing Algorithm was designed to
solve such systems over a finite field $GF(q)$. If the set $S_{k}$ of
solutions of the first $k$ equations together with the next equation $%
f_{k+1}=0$ is given then the algorithm constructs the set $S_{k+1}.$ It is
shown there that the average complexity of finding all solutions to the
original system is $O(mq^{\max \left\vert \cup _{1}^{k}X_{j}\right\vert
-k}), $ where $m$ is the total number of equations, and $\cup _{1}^{k}X_{j}$
is the set of all unknowns actively occurring in the first $k$ equations.
Clearly, the complexity of finding all solutions to the system by the Gluing
Algorithm depends on the order of equations. Therefore one is interested to
find a permutation $\pi $ that minimizes the average complexity, and also in
the worst case scenario, i.e., the system of equation for which the average
complexity is maximum. Therefore I. Semaev \cite{Igor} suggested to study
the following combinatorial problem.

Let $\mathcal{S}_{n,m,c}$ be a family of all collections of sets $\mathcal{X=%
}\{X_{1},...,X_{m}\},$ where $X_{i}\subset X,\left\vert X\right\vert =n,$
and $\left\vert X_{i}\right\vert \leq c$ for all $i=1,...,m;$ we allow that
some set may occur in $\mathcal{X}$ more than once. Further, let $\pi $ be a
permutation on $[m]=\{1,...,m\},$ and $1\leq k\leq m.$ Then we set $\Delta (%
\mathcal{X},\pi ,k):=\left\vert \dbigcup\limits_{i=1}^{k}X_{\pi
(i)}\right\vert -k$, and $\Delta (\mathcal{X},\pi ):=\max_{1\leq k\leq
m}\Delta (\mathcal{X},\pi ,k),$ and $\Delta (\mathcal{X}):\mathcal{=}%
\min_{\pi }\Delta (\mathcal{X},\pi ),$ where the minimum runs over all
permutations $\pi $ on $[m].$ Finally, $f_{c}(n,m):=\max_{\mathcal{X}}\Delta
(\mathcal{X}),$ where the maximum is taken over all families $\mathcal{X}$
in $\mathcal{S}_{n,m,c}.$

In this paper we confine ourselves to the case $\left\vert X_{i}\right\vert
\leq 3$ for all $i\in \lbrack m],$ that is, to the case when each equation
of the sparse system contains at most three active variables. We determine $%
f_{2}(n,m)$ for $n\geq 2$ and all $m,$ and also $f_{3}(n,n)$ for $n\leq 9.$
The main result of the paper claims that $f_{3}(n,n)$ grows linearly. More
precisely we show that

\begin{theorem}
\label{bounds} For all $n$ sufficiently large, $f_{3}(n,n)\geq 0.0818757697n$
\mbox{\rm \.=} $\frac{n}{12.2137},$ while for all $n\geq 3,f_{3}(n,n)\leq
\left\lceil \frac{n}{4}\right\rceil +2.$
\end{theorem}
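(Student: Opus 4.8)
The statement has two independent halves: an upper bound $f_3(n,n)\le\lceil n/4\rceil+2$ valid for all $n\ge 3$, and a lower bound $f_3(n,n)\ge c\,n$ with $c\approx 0.0819$ for large $n$. I would attack them by quite different means.

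\textbf{Upper bound.} Here I must show that \emph{every} family $\mathcal X=\{X_1,\dots,X_n\}$ of at-most-$3$-element subsets of an $n$-set admits an ordering $\pi$ with $\Delta(\mathcal X,\pi,k)=|\bigcup_{i\le k}X_{\pi(i)}|-k\le\lceil n/4\rceil+2$ for every prefix length $k$. The natural approach is a greedy/amortized argument on the associated $3$-uniform (multi-)hypergraph: process the sets one at a time, always appending a set that introduces as few new vertices as possible, and track the potential $\phi(k)=|\bigcup_{i\le k}X_{\pi(i)}|-k$. Appending a set that brings in $j$ new vertices changes $\phi$ by $j-1$, so $\phi$ drops whenever we can find a set entirely inside the current vertex pool, stays flat when a set adds exactly one new vertex, and can rise only when every remaining set adds $\ge 2$ new vertices. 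The crux is a counting lemma: if at some stage every remaining set would add at least two new vertices, then the remaining sets are "spread out" enough that only few of them can exist relative to the vertices they must still consume — each remaining set uses $\ge 2$ private-ish vertices, so roughly $2(\#\text{remaining sets})\le(\#\text{unused vertices})+O(1)$; combined with the global balance $\sum(\text{new vertices}) = |\bigcup X_i|\le n$ and $\#\text{sets}=n$, this forces the peak of $\phi$ to be at most about $n/4$. One then has to be careful to schedule all the "size-$\le 1$ increment" sets first and show the bad stretches are short; the $\lceil n/4\rceil$ should fall out of the extremal case where the hypergraph decomposes into many disjoint "diamonds" (4 vertices, pairs of triples), and the additive $+2$ absorbs boundary/rounding effects and the fact that the very first set can cost $2$.

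\textbf{Lower bound.} Here I need to \emph{construct}, for all large $n$, a family on $n$ vertices with $n$ sets for which \emph{no} ordering keeps $\Delta$ below $0.0819\,n$. The strategy is to build a highly expanding $3$-uniform hypergraph: if $H$ has the property that every vertex subset $W$ with $|W|\le\alpha n$ spans fewer than $|W|-\beta n$ edges (an isoperimetric/expansion condition), then for the prefix that first reaches $k=\alpha n$ sets the union of vertices is large and $\Delta\ge\beta n$ regardless of $\pi$. Concretely I would take a random or explicit near-regular $3$-uniform hypergraph with $n$ edges on $n$ vertices (average degree $3$), and prove via a first-moment/union-bound argument that with positive probability every small vertex set is sparse in edges. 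The precise constant $0.0818757697$ strongly suggests the bound is obtained by optimizing an explicit function — balancing the two parameters $\alpha,\beta$ in an entropy/Chernoff estimate of the form "probability that some $\alpha n$-set spans $(\alpha-\beta)n$ edges is $\le\binom{n}{\alpha n}\binom{n}{(\alpha-\beta)n}p^{(\alpha-\beta)n}$", set the exponent negative, and maximize $\beta$. So the real work is: (i) set up the random model cleanly, (ii) write the union bound, (iii) extract the optimization, and (iv) verify the optimizer gives exactly the claimed decimal.

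\textbf{Main obstacle.} I expect the lower bound's probabilistic optimization to be the delicate part: getting a clean expansion statement that is simultaneously strong enough to yield a linear $\Delta$ and loose enough to hold with positive probability requires carefully choosing the random hypergraph model (edges i.i.d.\ vs.\ a configuration/degree-constrained model) so that dependencies do not wreck the first-moment bound, and then honestly carrying through the $\binom{n}{k}\approx 2^{nH(k/n)}$ estimates to pin down the constant $0.0818757697$ rather than a weaker one. The upper-bound side is more routine but still needs a genuinely correct amortized analysis — in particular ruling out long consecutive runs of sets that each add two new vertices, which is exactly where the hypergraph structure (every set has size $\le 3$, there are only $n$ of them) must be used in full.
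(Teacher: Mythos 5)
Your lower-bound plan is essentially the paper's own proof: the authors use the degree-constrained random model $X_i=\{i,\pi(i),\tau(i)\}$ for two independent uniform permutations $\pi,\tau$, bound by a first-moment/union-bound argument the probability that some $cn$ of the sets have union of size at most $(c+\varepsilon)n$, convert the binomial coefficients to binary entropies, and check that $(1-c)H(\frac{\varepsilon}{1-c})+2(c+\varepsilon)H(\frac{c}{c+\varepsilon})-H(c)<0$ at $c=0.4590625$, $\varepsilon=0.0818757697$; so for that half you have the right route, including the right choice of a configuration-type model over i.i.d.\ edges.

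The upper bound is where your proposal has a genuine gap. The step you call the crux --- ``each remaining set uses $\ge 2$ private-ish vertices, so $2(\#\text{remaining sets})\le(\#\text{unused vertices})+O(1)$'' --- is false: sets that each add two new vertices need not add \emph{disjoint} pairs, so arbitrarily many of the $n$ sets can meet the unused part $\overline M$ in pairs drawn from a small pool. A greedy (standard) ordering by itself only yields $\Delta\le\lfloor\frac{n+1}{2}\rfloor$ (Lemma \ref{5} of the paper), and getting from $n/2$ down to $n/4$ is exactly the hard part. The paper's argument is a proof by contradiction: assuming $\Delta(\mathcal X)>\lceil\frac{n}{4}\rceil+2$, it fixes an extremal prefix at which the value $\lceil\frac{n}{4}\rceil+1$ is attained, lets $M$ be that prefix's union, and splits the remaining sets into $\mathcal A_2,\mathcal A_3$ according to $|X\cap\overline M|\in\{2,3\}$. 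When $|\mathcal A_2|\ge|\overline M|$, the traces of $\mathcal A_2$ on $\overline M$ form a family of at least $|\overline M|$ many $2$-sets, and the exact result $f_2(n,m)=1$ for $m\ge n-1$ (Theorem \ref{3}) caps the further growth at $+1$; when $|\mathcal A_2|<|\overline M|$, one builds an alternative ordering that starts with $\mathcal A_3$ and contradicts the extremality of the chosen prefix. Your sketch contains neither of these ingredients --- the reduction to the $2$-set problem and the reordering argument --- and without them the amortized analysis does not close. As a further warning sign, your guessed extremal configuration of $n/4$ disjoint ``diamonds'' has only $n/2$ sets rather than $n$, and ordered component by component it would drive $\Delta$ up to about $n/2$, so it cannot be the structure behind the $\lceil\frac{n}{4}\rceil+2$ bound.
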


\begin{conjecture}
The quotient $\frac{f_{3}(n,n)}{n}$ tends to a constant as $n\to\infty$.
\end{conjecture}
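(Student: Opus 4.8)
The plan is to deduce the existence of $\lim_{n\to\infty} f_3(n,n)/n$ from Fekete's lemma, in the form valid for sequences that are superadditive up to a bounded additive error. Concretely, I would try to prove that there is an absolute constant $C$ with
\[
 f_3(a+b,\,a+b)\ \ge\ f_3(a,a)+f_3(b,b)-C \qquad\text{for all } a,b.
\]
Granting this, the shifted sequence $f_3(n,n)-C$ is genuinely superadditive, so $\big(f_3(n,n)-C\big)/n$ converges to its supremum; since Theorem~\ref{bounds} gives $f_3(n,n)\le \lceil n/4\rceil+2$, that supremum is finite, and since $f_3(n,n)\ge 0.0818757697\,n$ it is positive. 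Hence $f_3(n,n)/n$ converges to a constant in $[0.0818757697,\tfrac14]$. Thus the whole problem reduces to producing, from extremal families $\mathcal{A}\in\mathcal{S}_{a,a,3}$ and $\mathcal{B}\in\mathcal{S}_{b,b,3}$ realizing $\Delta(\mathcal{A})=f_3(a,a)$ and $\Delta(\mathcal{B})=f_3(b,b)$, a single family on $a+b+O(1)$ unknowns and equations whose parameter $\Delta$ is at least $f_3(a,a)+f_3(b,b)-O(1)$.

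A first, instructive attempt is the disjoint union $\mathcal{A}\sqcup\mathcal{B}$. Here the minimizing permutation defeats us: because each part is balanced (number of unknowns equals number of equations), processing $\mathcal{A}$ completely returns the running quantity $|\cup_{1}^{k} X_{\pi(i)}|-k$ to $0$ before $\mathcal{B}$ begins, so the peak is only $\max\!\big(f_3(a,a),f_3(b,b)\big)$ rather than the sum. The same computation shows $f_3$ is monotone: adjoining a fresh unknown $u$ together with the singleton equation $\{u\}$ inserts a \emph{plateau} into every profile (one new vertex, one new equation, net change $0$) and hence gives $f_3(n+1,n+1)\ge f_3(n,n)$ at no cost. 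The lesson is that superadditivity requires \emph{linking} the two extremal families so that the deficiency built up while solving $\mathcal{A}$ cannot be discharged before $\mathcal{B}$ is under way. The construction I would attempt introduces a bounded \emph{synchronizer} set $S$ of new unknowns together with $O(1)$ gate equations, arranged so that (i) no equation of $\mathcal{B}$ can be activated until the unknowns of $S$ have been revealed, and (ii) the unknowns of $S$ can be revealed only after $\mathcal{A}$ has been driven through its peak, the closing equations of $\mathcal{A}$ that would bring the profile back down being themselves gated behind $S$. If both can be enforced, then at the moment $\mathcal{B}$ attains its own peak the $\mathcal{A}$-deficiency is still $\ge f_3(a,a)-O(1)$, the two contributions add, and the overall peak is $\ge f_3(a,a)+f_3(b,b)-O(1)$. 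Keeping every equation of size $\le 3$ limits us to spending one slot per equation on $S$, which is why $|S|$ and the number of gate equations must be $O(1)$; this is the only place where the bound on equation size enters.

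The hard part will be making the linkage robust against the minimizing permutation $\pi$, which will try to \emph{desynchronize} the two peaks: solve the closing equations of $\mathcal{A}$ early, or interleave $\mathcal{A}$ and $\mathcal{B}$ so that neither is near its peak while the other is high. Defeating all $\pi$ with only $O(1)$ overhead seems to require a \emph{localized} lower-bound certificate for $\Delta(\mathcal{A})=f_3(a,a)$ — one exhibiting a bounded boundary set of unknowns through which every successful ordering must pass at its peak — and there is no guarantee that the extremal $\mathcal{A}$ admits such a certificate; a priori its hardness may be spread over $\Omega(a)$ unknowns. This is the crux, and the reason the statement is posed as a conjecture rather than a theorem. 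Failing a direct superadditivity proof, I would try to sandwich the limit by also establishing the complementary near-subadditivity $f_3(a+b,a+b)\le f_3(a,a)+f_3(b,b)+O(1)$, via a divide-and-conquer ordering of an arbitrary hard instance; the obstacle there is dual, since an adversarial instance need not split into two balanced sub-instances of prescribed sizes with only $O(1)$ interaction, and controlling that interaction is exactly what makes both directions delicate.
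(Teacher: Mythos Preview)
The statement is labeled a \emph{Conjecture} in the paper, and the paper offers no proof of it; there is therefore nothing to compare your attempt against. What you have written is not a proof either, and you say so yourself: the Fekete approach would work if one could establish
\[
f_3(a+b,a+b)\ \ge\ f_3(a,a)+f_3(b,b)-C
\]
for an absolute constant $C$, but your synchronizer construction is only a heuristic, and the obstacle you name --- that an extremal family $\mathcal{A}$ need not admit a bounded ``boundary'' through which every optimal ordering passes at its peak --- is exactly the point at which the argument is incomplete. Your analysis of why the disjoint union fails is correct (after all of $\mathcal{A}$ the running deficit returns to $0$, so the peaks do not add), and the monotonicity $f_3(n,n)\le f_3(n+1,n+1)$ you invoke is the paper's Lemma~\ref{Claim}. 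But neither the gating idea nor the alternative near-subadditivity route is carried through, and both face the same structural difficulty: the adversary (the minimizing permutation on one side, the worst-case instance on the other) is not controlled by any $O(1)$-size gadget you have exhibited.

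In short: the strategy is reasonable and the diagnosis of the difficulty is accurate, but the proposal remains a program rather than a proof, consistent with the paper's decision to state the assertion as an open conjecture.
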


We point out that after we obtained the above upper bound, an asymptotically
better inequality $f_{3}(n,n)\leq \frac{n}{5}+1+\log _{2}n$ has been proved
in \cite{Igor}. For small $n$ the bound in Theorem \ref{bounds} is slightly
better. 
However, the main reason why we include it in the paper is that it applies
different techniques, and we hope they may have the potential to obtain even
a better bound.

\section{Preliminaries}

In this section we introduce some more needed notions and notation. Several
auxiliary lemmas and observations will be stated as well.\medskip

We start with a lemma that allows one to confine to a special type of
families in $\mathcal{S}_{n,m,c}.$

\begin{lemma}
\label{1}Let $n\geq c,$ there exists a family $\mathcal{X}\in \mathcal{S}%
_{n,m,c}$ so that $\Delta (\mathcal{X})=f_{c}(n,m)$ and $\left\vert
X_{i}\right\vert =c$ for each $i=1,...,m.$
\end{lemma}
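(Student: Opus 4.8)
The plan is to start from any extremal family $\mathcal{X}^{\ast}\in\mathcal{S}_{n,m,c}$ with $\Delta(\mathcal{X}^{\ast})=f_{c}(n,m)$ and modify it set by set, enlarging every set of size $<c$ to size exactly $c$ without ever decreasing $\Delta$. So suppose some $X_{i}$ has $|X_{i}|<c$. Since $n\geq c$, there is an element $x\in X\setminus X_{i}$, and I would replace $X_{i}$ by $X_{i}':=X_{i}\cup\{x\}$, keeping all other sets unchanged; call the new family $\mathcal{X}'$. It is still in $\mathcal{S}_{n,m,c}$. The key claim is $\Delta(\mathcal{X}')\geq\Delta(\mathcal{X}^{\ast})$. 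To see this, note that for \emph{every} permutation $\pi$ and every $k$ we have $\bigcup_{j=1}^{k}X_{\pi(j)}'\supseteq\bigcup_{j=1}^{k}X_{\pi(j)}$, hence $\Delta(\mathcal{X}',\pi,k)\geq\Delta(\mathcal{X}^{\ast},\pi,k)$, and therefore $\Delta(\mathcal{X}',\pi)\geq\Delta(\mathcal{X}^{\ast},\pi)$. Taking the minimum over $\pi$ gives $\Delta(\mathcal{X}')\geq\Delta(\mathcal{X}^{\ast})=f_{c}(n,m)$. Since $f_{c}(n,m)$ is the maximum of $\Delta$ over the whole family $\mathcal{S}_{n,m,c}$, in fact $\Delta(\mathcal{X}')=f_{c}(n,m)$ as well.

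Now I would iterate: the quantity $\sum_{i=1}^{m}\bigl(c-|X_{i}|\bigr)$ is a nonnegative integer that strictly decreases by $1$ at each such step, so after finitely many steps we reach a family $\mathcal{X}\in\mathcal{S}_{n,m,c}$ with $|X_{i}|=c$ for all $i$ and $\Delta(\mathcal{X})=f_{c}(n,m)$, which is exactly what the lemma asserts. (One could phrase this slightly more slickly by just picking, among all extremal families, one that maximizes $\sum_{i}|X_{i}|$, and then arguing that it cannot contain a set of size $<c$ — the enlargement step above would contradict maximality while staying extremal.)

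The only thing that really needs care — and it is the one potential obstacle — is the monotonicity observation that enlarging a set cannot decrease $\Delta(\mathcal{X})$: one has to be slightly careful because $\Delta(\mathcal{X})$ is a $\min_{\pi}\max_{k}$, and monotonicity of $\max_{k}$ under pointwise domination of the inner quantities is clear, but one must check it survives the outer $\min$. It does, because the inequality $\Delta(\mathcal{X}',\pi)\geq\Delta(\mathcal{X}^{\ast},\pi)$ holds for the \emph{same} $\pi$ on both sides, so whichever $\pi$ attains the minimum for $\mathcal{X}'$ still gives a value $\geq\Delta(\mathcal{X}^{\ast})$. There is no issue with the condition $n\geq c$ beyond guaranteeing that a fresh element $x\notin X_{i}$ exists whenever $|X_{i}|<c$. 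Multiplicity of sets in $\mathcal{X}$ causes no trouble since we treat the collection as indexed by $i\in[m]$ throughout.
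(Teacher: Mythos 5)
Your proposal is correct and rests on exactly the same key observation as the paper's proof, namely that enlarging sets ($X_i\subseteq X_i'$ for all $i$) can only increase $\Delta$, since the inequality $\Delta(\mathcal{X}',\pi,k)\geq\Delta(\mathcal{X},\pi,k)$ holds permutation by permutation and thus survives the outer minimum. The paper states this monotonicity in one line and you merely spell out the iteration and the survival of the $\min_\pi$ more explicitly, so the two arguments are essentially identical.
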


\begin{proof}
Let $\mathcal{X=}\{X_{1},...,X_{m}\}$ and $\mathcal{X}^{\prime
}=\{X_{1}^{\prime },...,X_{m}^{\prime }\}$ be in $\mathcal{S}_{n,m,c}$ and $%
X_{i}\subseteq X_{i}^{\prime }$ for all $i.$ Then $\Delta (\mathcal{X)}\leq
\Delta (\mathcal{X}^{\prime })$ and the statement follows.
\end{proof}

\medskip The next observations follow directly from the definition of $%
\Delta (\mathcal{X},\pi \mathcal{)}.\boldsymbol{\ }$

\begin{lemma}
\label{2a} Let $\mathcal{X=\{X}_{1},...,X_{k+s}\},$ $\mathcal{X}_{k}\mathcal{%
=\{X}_{1},...,X_{k}\},\mathcal{Y}_{s}\mathcal{=\{}Y_{1},...,Y_{s}%
\},Y_{i}=X_{k+i}-\dbigcup\limits_{i=1}^{k}X_{i},$ and $\pi _{k}$ and $\pi
^{\prime }$ be the restriction of an ordering $\pi $ of $[k+s]$ to $[k]$ and 
$[k+s]-[k],$ respectively. Then $\newline
$(a) 
\hbox{$\mathbf{\Delta (}\mathcal{X},\pi ,k+s\mathcal{)}=\mathbf{\Delta (}\mathcal{X},\pi ,k\mathcal{)}+\mathbf{\Delta (}\mathcal{Y}_{s},\pi ^{\prime
},s)=\mathbf{\Delta (}\mathcal{X},\pi ,k\mathcal{)+}\left\vert
\dbigcup\limits_{i=k+1}^{s}X_{i}-\dbigcup\limits_{i=1}^{k}X_{i}\right\vert
-s $} $\newline
$(b) $\mathbf{\Delta (}\mathcal{X},\pi )=\max \{\mathbf{\Delta (}\mathcal{X}%
_{k},\pi _{k}\mathcal{)},\mathbf{\Delta (}\mathcal{X},\pi ,k\mathcal{)+}%
\mathbf{\Delta (}\mathcal{Y}_{s},\pi ^{\prime })\}.$
\end{lemma}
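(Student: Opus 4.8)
The plan is to obtain both identities simply by unwinding the definition of $\Delta(\mathcal{X},\pi,\cdot)$, the single non-routine ingredient being a bookkeeping identity about unions. I read the hypothesis as stipulating that $\pi$ is an ordering of $[k+s]$ under which the indices $1,\dots,k$ all precede the indices $k+1,\dots,k+s$ -- this is exactly what makes the ``restrictions'' $\pi_k$ and $\pi'$ well defined: $\pi_k$ is the induced ordering of $[k]$, and $\pi'$ the induced ordering of $\{k+1,\dots,k+s\}$, which we transport to $[s]$ via $j\mapsto k+j$ so that it orders $\mathcal{Y}_s=\{Y_1,\dots,Y_s\}$.

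The core step is as follows. Fix $t$ with $0\le t\le s$ and set $A_t=\{\pi(k+1),\dots,\pi(k+t)\}\subseteq\{k+1,\dots,k+s\}$. Since $\{\pi(1),\dots,\pi(k)\}=[k]$,
\[
\bigcup_{i=1}^{k+t}X_{\pi(i)}=\Big(\bigcup_{i=1}^{k}X_i\Big)\cup\Big(\bigcup_{i\in A_t}X_i\Big)=\Big(\bigcup_{i=1}^{k}X_i\Big)\cup\Big(\bigcup_{i\in A_t}Y_{i-k}\Big),
\]
where in the last union the two parts are disjoint, since $Y_{i-k}=X_i-\bigcup_{1}^{k}X_j$ by definition; moreover $\{\,i-k:i\in A_t\,\}=\{\pi'(1),\dots,\pi'(t)\}$. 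Taking cardinalities and subtracting $k+t$ yields, for every $t\in\{0,1,\dots,s\}$,
\[
\Delta(\mathcal{X},\pi,k+t)=\Delta(\mathcal{X},\pi,k)+\Delta(\mathcal{Y}_s,\pi',t).
\]
Part (a) is the case $t=s$: the left side is $\Delta(\mathcal{X},\pi,k+s)$, and on the right $\Delta(\mathcal{Y}_s,\pi',s)=\left|\bigcup_{1}^{s}Y_j\right|-s=\left|\bigcup_{i=k+1}^{k+s}X_i-\bigcup_{1}^{k}X_i\right|-s$, which is the alternative form displayed in the statement. (This value is independent of $\pi'$; only the partial maxima entering part (b) depend on $\pi'$.)

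For part (b) I split $\Delta(\mathcal{X},\pi)=\max_{1\le \ell\le k+s}\Delta(\mathcal{X},\pi,\ell)$ at $\ell=k$. For $1\le \ell\le k$ the first $\ell$ sets in the $\pi$-order coincide with the first $\ell$ sets in the $\pi_k$-order of $\mathcal{X}_k$, so $\Delta(\mathcal{X},\pi,\ell)=\Delta(\mathcal{X}_k,\pi_k,\ell)$, whence $\max_{1\le \ell\le k}\Delta(\mathcal{X},\pi,\ell)=\Delta(\mathcal{X}_k,\pi_k)$. For $\ell=k+t$ with $1\le t\le s$ the displayed identity gives $\Delta(\mathcal{X},\pi,k+t)=\Delta(\mathcal{X},\pi,k)+\Delta(\mathcal{Y}_s,\pi',t)$, so maximizing over those $\ell$ produces $\Delta(\mathcal{X},\pi,k)+\Delta(\mathcal{Y}_s,\pi')$; taking the maximum of the two ranges gives the asserted formula. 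I do not expect a genuine obstacle here -- everything reduces to the union identity and to careful tracking of the shift $j\mapsto k+j$; the one point worth stating explicitly is the standing assumption that $\pi$ orders the block $[k]$ before $[k+s]\setminus[k]$, which is what lets us identify $\bigcup_{1}^{k}X_{\pi(i)}$ with $\bigcup_{1}^{k}X_i$ and what makes $\pi_k$ and $\pi'$ meaningful in the first place.
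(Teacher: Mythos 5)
Your proof is correct: the paper gives no proof of this lemma (it is introduced only with the remark that it ``follows directly from the definition of $\Delta(\mathcal{X},\pi)$''), and your argument is precisely the routine verification the authors intend --- the disjoint-union bookkeeping identity, the shift $j\mapsto k+j$, and the split of the maximum at $\ell=k$, together with the correct reading of the implicit hypothesis that $\pi$ places the block $[k]$ before $[k+s]\setminus[k]$. You also implicitly and correctly repair the typo in the statement, where $\bigcup_{i=k+1}^{s}X_i$ should read $\bigcup_{i=k+1}^{k+s}X_i$.
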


Clearly, for each $\mathcal{X\in S}_{n,m,c}$ and all $k\leq m-1,$ we get 
\begin{equation*}
-1\leq \Delta (\mathcal{X},\pi ,k+1)-\Delta (\mathcal{X},\pi ,k)\leq c-1.
\end{equation*}%
The following observation will be frequently used.

\begin{lemma}
\label{2}Let $1\leq s\leq c.$ Then $\Delta (\mathcal{X},\pi ,k+1)-\Delta (%
\mathcal{X},\pi ,k)=s-1$ iff $\ \left\vert X_{\pi
(k+1)}-\dbigcup\limits_{i=1}^{k}X_{\pi (i)}\right\vert =s$.
\end{lemma}

The notions of a connected/disconnected family of sets as well as a
connectivity component will be transferred from the corresponding graph.
More precisely:

\begin{definition}
\label{D}Let $\mathcal{X}=\{X_{1},...,X_{m}\}.$ Then by $G_{\mathcal{X}%
}=(V,E)$ we denote a graph with the vertex set $V=\dbigcup%
\limits_{i=1}^{m}X_{i},$ and $\{i,j\}$ is an edge in $E$ if there is a set $%
X $ in $\mathcal{X}$ so that $\{i,j\}\subset X$. The family $\mathcal{X}$
will be called connected/disconnected if $G_{\mathcal{X}}$ is connected.\ If 
$\mathcal{X}$ is disconnected, and $C=(V_{C},E_{C})$ is a component of $G_{%
\mathcal{X}}$ then the set $V_{C}$ will be called a component of $\mathcal{X}
$. By the order $\left\vert C\right\vert $ of $C$ we mean $\left\vert
V_{C}\right\vert ,$ while by the size $e(C)$ of $C$ we understand the number
of sets $X$ in $\mathcal{X}$ such that $X\subset V_{C}.$
\end{definition}

The following inequality is well known and easy to see.

\begin{lemma}
\label{2b}Let $\mathcal{X}\in S_{n,m,c}$ be connected. Then $m\geq
\left\lceil \frac{n-1}{c-1}\right\rceil .$
\end{lemma}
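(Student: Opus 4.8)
The statement is that a connected family $\mathcal{X}\in S_{n,m,c}$ needs at least $\lceil\frac{n-1}{c-1}\rceil$ sets. My plan is to argue by a greedy/incremental covering argument on the ground set, exactly in the spirit of Lemmas~\ref{2a} and~\ref{2}. Fix any ordering $\pi$ of $[m]$ such that every prefix $X_{\pi(1)},\dots,X_{\pi(k)}$ spans a connected subgraph of $G_{\mathcal{X}}$ — such an ordering exists because $G_{\mathcal{X}}$ is connected, so we can build it up one set at a time always keeping the union connected (pick $X_{\pi(1)}$ arbitrarily, and at each step pick a not-yet-used set sharing at least one element with $\bigcup_{i\le k}X_{\pi(i)}$; this is possible until all of $V=\bigcup_i X_i$ is covered, and any leftover sets are then appended in any order).

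The key step is the counting. Let $U_k=\bigcup_{i=1}^k X_{\pi(i)}$. Then $|U_1|=|X_{\pi(1)}|\le c$, and for $k\ge 2$, since $X_{\pi(k)}$ meets $U_{k-1}$ in at least one element, we have $|U_k|-|U_{k-1}|=|X_{\pi(k)}\setminus U_{k-1}|\le c-1$. Summing these increments from $k=1$ to $m$ gives
\begin{equation*}
n=|U_m|\le c+(m-1)(c-1)=(c-1)m+1,
\end{equation*}
hence $m\ge\frac{n-1}{c-1}$, and since $m$ is an integer, $m\ge\lceil\frac{n-1}{c-1}\rceil$.

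There is really no serious obstacle here; the only point needing a word of care is the existence of a connectivity-preserving ordering, which is the standard fact that a connected graph (here the ``intersection structure'' of the sets) can be grown one edge/set at a time. One could alternatively phrase the whole argument via a spanning tree of $G_{\mathcal{X}}$: each set $X_{\pi(k)}$ contributes at most $c-1$ new vertices beyond the first, and $n$ vertices must be reached. I would present the incremental version since it meshes directly with the $\Delta(\mathcal{X},\pi,k)$ machinery already set up, and in fact the inequality $|U_k|-|U_{k-1}|\le c-1$ is just the left half of the displayed bound $-1\le\Delta(\mathcal{X},\pi,k+1)-\Delta(\mathcal{X},\pi,k)\le c-1$ stated just before Lemma~\ref{2}, specialized to an ordering in which every new set overlaps the current union.
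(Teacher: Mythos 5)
Your proof is correct: the greedy connectivity-preserving ordering exists because whenever $U_{k-1}\neq V$ some unused set must meet $U_{k-1}$ (otherwise $G_{\mathcal{X}}$ would be disconnected), and the count $n\le c+(m-1)(c-1)$ then gives the bound. The paper offers no proof at all for Lemma~\ref{2b} (it is dismissed as ``well known and easy to see''), and your incremental argument is exactly the standard one that is being alluded to, so there is nothing further to compare.
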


A standard ordering $\pi $ of sets in $\mathcal{X}$ will be defined
recursively. Choose $X_{\pi (1)}$ in an arbitrary way. After $t\geq 1$ sets
have been ordered (that is, when $\pi (1),...,\pi (t)$ have been set) we
choose $\pi (t+1)$ so that $\left\vert X_{\pi
(t+1)}-\dbigcup\limits_{i=1}^{t}X_{\pi (i)}\right\vert $ is minimum. If $%
\mathcal{X}$ is connected, we have $\left\vert X_{\pi
(t+1)}-\dbigcup\limits_{i=1}^{t}X_{\pi (i)}\right\vert \leq c-1$ for all $%
t\geq 1.$ This in turn implies, see Lemma \ref{2}, that 
\begin{equation}
\text{ for all }t\leq n-1,\text{ }\Delta (\mathcal{X},\pi ,t+1)-\Delta (%
\mathcal{X},\mathcal{\pi },t)\leq c-2  \label{r1}
\end{equation}%
For a disconnected family $\mathcal{X\,\ }$we get that in this case a
standard ordering is obtained by first ordering the components of $\mathcal{X%
}$ and then the sets in the individual components are ordered in a standard
way.

\section{Families with $2$-sets}

In this section we determine the value of $f_{2}(n,m)$ for all $m,n.$ It is
obvious that for a connected family $\mathcal{X\in }S_{n,m,2}$, it is $%
\Delta (X)=1.$ The proof in the case of $\mathcal{X}$ disconnected is more
involved. We note that following key claim is true only for families of $2$%
-sets.

\begin{lemma}
\label{7}Let $\mathcal{X\in }S_{n,m,2}.$ Then there is a standard ordering $%
\pi $ so that $\Delta (\mathcal{X},\pi )=\Delta (\mathcal{X}).$
\end{lemma}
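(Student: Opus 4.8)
The plan is to analyze the structure of a general ordering $\pi$ achieving $\Delta(\mathcal{X},\pi)=\Delta(\mathcal{X})$ and show it can be massaged into a standard one without increasing $\Delta$. First I would reduce to the case where $\mathcal{X}$ is connected: if $\mathcal{X}$ has components $V_1,\dots,V_r$, then by Lemma~\ref{2a}(a) the quantity $\Delta(\mathcal{X},\pi,k)$ is additive along the portions of $\pi$ lying inside distinct components once a component is ``entered'', so any optimal ordering can be rearranged to process the components consecutively (interleaving components can only be disadvantageous, since each component contributes a net drop of $-1$ after it is completed, i.e.\ $e(C)\ge |C|-1$ forces $\sum_{X\subset V_C}1 - |V_C| \ge -1$ with the $-1$ best ``cashed in'' before starting a new component). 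Then within each component it suffices to prove the statement for connected families, where we know $\Delta(\mathcal{X})=1$ when $|V_C|\ge 2$.

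For a connected family $\mathcal{X}\in S_{n,m,2}$, the key point special to $2$-sets is that $G_{\mathcal{X}}$ is essentially the multigraph whose edges are the sets $X_i$, so ordering the sets is ordering the edges, and $\dbigcup_{i=1}^k X_{\pi(i)}$ is just the vertex set spanned by the first $k$ edges. A standard ordering always adds an edge incident to the current vertex set, hence $|X_{\pi(t+1)} - \dbigcup_{i=1}^t X_{\pi(i)}| \le 1$ for all $t$, which by Lemma~\ref{2} gives $\Delta(\mathcal{X},\pi,t+1)-\Delta(\mathcal{X},\pi,t)\le 0$; since $\Delta(\mathcal{X},\pi,1)=|X_{\pi(1)}|-1 = 1$, we get $\Delta(\mathcal{X},\pi)\le 1$ for any standard $\pi$, matching the lower bound $\Delta(\mathcal{X})\ge 1$. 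So for connected families a \emph{standard} ordering is automatically optimal, and combined with the reduction above we are done.

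The actual work is therefore the disconnected case, and I would carry it out as follows. Let $\pi$ be any ordering with $\Delta(\mathcal{X},\pi)=\Delta(\mathcal{X})$. I process components in an order chosen greedily by a suitable potential — e.g.\ always finish the current component before touching a new one, and among not-yet-started components pick one to start only after the running value has dropped as low as possible. Concretely, write the component orders as $|V_1|\le\dots\le|V_r|$; using Lemma~\ref{2a}(b) repeatedly, $\Delta(\mathcal{X},\pi)$ for a standard ordering that lists the components in nondecreasing order of size equals $\max_j\bigl(1 + \sum_{i<j}(1-|V_i|)\bigr)$ or so, and one checks this is minimized exactly by the nondecreasing-size order (an exchange argument: swapping an out-of-order adjacent pair of components does not increase the running maximum). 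Finally I verify that no ordering at all — standard or not — can beat this: peeling off whichever component $\pi$ completes first, induction on $r$ via Lemma~\ref{2a}(b) bounds $\Delta(\mathcal{X},\pi)$ from below by the same expression, so the standard ordering is optimal.

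The main obstacle I expect is the bookkeeping in the disconnected case: one must be careful that when $\pi$ interleaves components, $\Delta(\mathcal{X},\pi,k)$ is not simply additive (the ``$-k$'' term is shared), so the clean decomposition of Lemma~\ref{2a}(a) only applies when a component is handled in one contiguous block, and the exchange/rearrangement argument showing that interleaving is never strictly better than blocking — and that blocking in nondecreasing size order is optimal among blockings — needs to be done with the running maximum, not just the final value. Everything else (the connected case, the reduction) is immediate from Lemmas~\ref{2}, \ref{2a}, and~\ref{2b}.
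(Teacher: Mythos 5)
Your outline reduces the lemma to the claim that interleaving components never beats processing them in consecutive blocks, and you correctly flag this as the crux --- but you never actually prove it, and the justification you offer is wrong in its details. A completed component $C$ contributes $|V_C|-e(C)$ to the running value $\Delta(\mathcal{X},\pi,k)$, and since $e(C)\ge |V_C|-1$ this contribution is \emph{at most} $+1$ (equal to $+1$ for every tree component), not a ``net drop of $-1$''; it is precisely because each tree component adds $+1$ that several components push $\Delta$ upward, and the whole difficulty is to show that no interleaved schedule keeps the running maximum below what the best blocked schedule achieves. Your later ``peel off whichever component $\pi$ completes first'' induction does not obviously close this gap either: at the moment the first component is completed, other components are already partially started, so the prefix is not a union of whole components and Lemma~\ref{2a}(b) does not apply to it. A secondary error: the optimal blocked order is by nondecreasing $d(C)=|V_C|-e(C)$ (as in the Claim inside the proof of Theorem~\ref{3}), not by nondecreasing size, and your formula $\max_j\bigl(1+\sum_{i<j}(1-|V_i|)\bigr)$ should read $\max_j\bigl(1+\sum_{i<j}(|V_i|-e(C_i))\bigr)$; this slip is not fatal to the lemma (which only asserts that \emph{some} standard ordering is optimal), but it indicates the bookkeeping has not actually been carried out.

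The paper sidesteps all of this with a single greedy domination argument: starting from an optimal ordering $\tau$, it builds $\pi$ by always inserting, when possible, a not-yet-used set that adds at most one new element to the current union, and otherwise taking the earliest unused set in $\tau$'s order; one then checks that $\Delta(\mathcal{X},\pi,k)\le\Delta(\mathcal{X},\tau,k)$ for every $k$. This produces a standard ordering pointwise dominating $\tau$, so the de-interleaving question never arises. If you want to salvage your route, the missing piece is an explicit exchange argument on the \emph{running maximum} showing that gathering all sets of one component into a contiguous block never increases $\max_k\Delta(\mathcal{X},\pi,k)$; as written, that step is asserted rather than proved.
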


\begin{proof}
Let $\tau $ be an ordering of sets in $\mathcal{X}$ such that $\Delta (%
\mathcal{X},\tau )=\Delta (\mathcal{X}).$ We construct a desired ordering $%
\pi $ in a recursive way. First we set $\pi (1)=\tau (1).$ After $\pi (t)$
has been set (and $t<m$), we define $\pi (t+1)$ as follows. If possible
choose $\pi (t+1)$ such that 
\begin{equation}
\left\vert X_{\pi (t+1)}\cap \dbigcup\limits_{i=1}^{t}X_{\pi (i)}\right\vert
\leq 1  \label{r2}
\end{equation}
is satisfied, otherwise we set $\pi (t+1)=\tau (s),$ where $s$ is the
smallest number such that $X_{\tau (s)}$ has not been ordered yet in the
permutation $\pi .\,$\ It is not difficult to check that for all $\,k\leq m$
we have $\Delta (\mathcal{X},\pi ,k)\leq \Delta (\mathcal{X},\tau ,k).$
\end{proof}

We \ recall that a component of a graph comprising a single vertex is called
a singleton, or trivial.

\begin{theorem}
\label{3} For $n\geq 2$ and all $m,$ $f_{2}(n,m)$ equals the maximum number
of non-trivial components in a simple graph on $n$ vertices with $m$ edges;
i.e., $\ $ $f_{2}(n,m)=m$ for $m\leq \frac{n}{2},$ $f_{2}(n,m)=n-m$ for $%
\frac{n}{2}<m<n-1,$ and $f_{2}(n,m)=1$ for $m\geq n-1.$
\end{theorem}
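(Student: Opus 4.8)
The claim has two parts: a formula for $f_2(n,m)$ in the three ranges of $m$, and the identification of $f_2(n,m)$ with the maximum number of non-trivial components of a simple graph on $n$ vertices with $m$ edges. I would prove the two directions separately (lower bound via a construction, upper bound via a bound on $\Delta(\mathcal{X})$ for \emph{every} $\mathcal{X}\in S_{n,m,2}$), and throughout I would translate freely between a family $\mathcal{X}$ of $2$-sets and its graph $G_\mathcal{X}$ as in Definition~\ref{D}; note that because we may repeat sets, multiple edges cause no trouble, and by Lemma~\ref{1} we may assume every $X_i$ has size exactly $2$.

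\textbf{Lower bound (the construction).} First I would pin down the purely graph-theoretic quantity: on $n$ vertices with $m$ edges, the maximum number of non-trivial (i.e., non-singleton) components is $m$ when $m\le n/2$ (take $m$ disjoint edges), is $n-m$ when $n/2<m<n-1$ (one needs $n-k$ edges to leave $k$ non-trivial components spanning all $n$ vertices with none isolated, minimized by making $k-1$ of them single edges and one a tree on the rest), and is $1$ for $m\ge n-1$ (a connected graph fits). Then, for such an extremal graph $G$ with $t$ non-trivial components, I would show $\Delta(\mathcal{X})\ge t$ for the corresponding family: under \emph{any} ordering $\pi$, consider the first time an edge from the $i$-th component (in order of first appearance) is placed; by Lemma~\ref{2} its first edge contributes $+1$ to $\Delta(\mathcal{X},\pi,\cdot)$ (a brand-new pair of vertices, increment $s-1=1$), and within a component $\Delta$ never drops below the value it had when the component was first entered minus... — more cleanly: using Lemma~\ref{2a}(a) componentwise, $\Delta(\mathcal{X},\pi,m)=\sum_{\text{components }C}(\vert V_C\vert - e(C))$, and since each non-trivial component $C$ with $c_C$ vertices and $e_C\ge c_C-1$ edges satisfies $\vert V_C\vert-e(C)=c_C-e_C\le 1$ with equality iff $C$ is a tree, I instead track the running maximum: after processing a prefix consisting of whole components plus the first edge of the next one, the value is (number of non-trivial components seen so far), because each completed non-trivial component contributes $c_C-e_C\le 1$ and — this is the subtle point — one must choose the order of components and edges to keep the partial sums high; taking the components in any order and reading their edges so that each component is explored as a growing tree first, each component first contributes $+1$ and then only non-positive increments, so the running value reaches exactly $t$. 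Hence $\Delta(\mathcal{X})\ge t$, giving the lower bound in each range.

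\textbf{Upper bound.} Here I would use Lemma~\ref{7}: there is a standard ordering $\pi$ with $\Delta(\mathcal{X},\pi)=\Delta(\mathcal{X})$, and for a standard ordering the components of $G_\mathcal{X}$ are processed one at a time. Write the non-trivial components as $C_1,\dots,C_t$ in the order $\pi$ visits them (singletons contribute nothing and can be deleted). By Lemma~\ref{2a}(a) applied repeatedly, after finishing $C_1,\dots,C_j$ the value of $\Delta(\mathcal{X},\pi,\cdot)$ equals $\sum_{i\le j}(\vert C_i\vert-e(C_i))$; inside a single connected component a standard ordering satisfies \eqref{r1}, so the increments there are all $\le 0$ (since $c=2$, we get $c-2=0$), meaning the running value never exceeds its value at the moment the component was entered, which is $1+\sum_{i<j}(\vert C_i\vert-e(C_i))$. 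Since each $C_i$ is non-trivial and connected, $e(C_i)\ge\vert C_i\vert-1$ by Lemma~\ref{2b}, so each term $\vert C_i\vert-e(C_i)\le 1$; thus $\Delta(\mathcal{X},\pi)\le t\le$ (max number of non-trivial components), and moreover when $m\ge n-1$ a connected spanning family forces $t=1$ and $\Delta=1$. This matches the lower bound.

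\textbf{Main obstacle.} The delicate step is the lower-bound argument: exhibiting, for the extremal graph, an explicit reason that \emph{every} permutation $\pi$ (not just a standard one) yields $\Delta(\mathcal{X},\pi)\ge t$. The clean way is the telescoping identity $\Delta(\mathcal{X},\pi,m)=\sum_C(\vert C\vert-e(C))$ from Lemma~\ref{2a}(a) combined with choosing the extremal graph so that \emph{all but one} non-trivial component is a single edge (forcing those terms to equal $1$) and handling the last, larger tree-or-near-tree component by looking at the partial sum right after its first edge; one must check that reordering within that component to read it as a growing tree keeps the intermediate partial sums $\ge t$. Everything else — the graph-theoretic extremal count and the upper bound via Lemma~\ref{7} and \eqref{r1} — is routine once this is set up.
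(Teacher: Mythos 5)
There is a genuine gap in your upper bound. Your argument shows $\Delta (\mathcal{X},\pi )\leq t$, where $t$ is the number of non-trivial components of $G_{\mathcal{X}}$ for the particular family at hand, and you then invoke ``$t\leq $ maximum number of non-trivial components'' to close the proof. But that graph-theoretic maximum is \emph{not} $n-m$ in the range $\frac{n}{2}<m<n-1$ (nor is it $1$ for $m\geq n-1$): components containing cycles let you pack many edges without merging components. For example, with $n=12$ and $m=10$ the graph $K_{4}\cup 4K_{2}$ has $5$ non-trivial components while $n-m=2$; so your chain of inequalities only yields $f_{2}(12,10)\leq 5$, not $\leq 2$. (For the same reason your parenthetical count of the extremal quantity, ``one needs $n-k$ edges to leave $k$ non-trivial components,'' is an inequality in the wrong direction: it gives $k\geq n-m$, a lower bound on the number of components, not an upper bound.) The missing idea is that among standard orderings you must also choose the order of the \emph{components}: process first those with $d(C):=\left\vert C\right\vert -e(C)\leq 0$ (the ones containing cycles), so that the deficit they build up absorbs the $+1$ contributed by each tree component that follows. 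Since $\sum_{C}d(C)=\left\vert \bigcup X_{i}\right\vert -m\leq n-m$ and the partial sums of a non-decreasing rearrangement of the $d(C)$'s peak only at the very end, this yields $\Delta (\mathcal{X},\pi )\leq \max \{1,\,n-m\}$, which is what the theorem needs; for $K_{4}\cup 4K_{2}$ this ordering indeed gives $\Delta =2$, not $5$. This component-reordering step is precisely what the paper isolates in the second Claim inside its proof, and it is the heart of the argument.

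Your lower bound is essentially fine once cleaned up, but the reasoning is pointed the wrong way: to bound $\Delta (\mathcal{X})=\min_{\pi }\max_{k}\Delta (\mathcal{X},\pi ,k)$ from below you must beat \emph{every} ordering, so ``choosing the order of components and edges to keep the partial sums high'' is not a move available to you. Fortunately no choice is needed: $\Delta (\mathcal{X},\pi )\geq \Delta (\mathcal{X},\pi ,m)=\left\vert \bigcup_{i=1}^{m}X_{i}\right\vert -m$, which is independent of $\pi $ and equals the number of non-trivial components whenever $G_{\mathcal{X}}$ restricted to $\bigcup X_{i}$ is a forest; evaluating this for $m$ disjoint edges (when $m\leq \frac{n}{2}$), for a spanning forest with $n-m$ components (when $\frac{n}{2}<m<n-1$), and noting $\Delta (\mathcal{X},\pi ,1)=1$ for $m\geq n-1$, gives the three lower bounds. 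All of the discussion of intermediate partial sums in your ``main obstacle'' paragraph can then be deleted.
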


\begin{proof}
Let $\mathcal{X}=\{X_{1},...,X_{m}\}$ be a family of sets so that $\Delta (%
\mathcal{X)=}f_{2}(n,m).$ By Lemma \ref{1}, we assume that $\left\vert
X_{i}\right\vert =2$ for all $i\in \lbrack n].$ Consider first the case when 
$\mathcal{X}$ is connected; clearly in this case we have $m\geq n-1$. Let $%
\pi $ be a standard ordering of sets in $\mathcal{X}.$ Then $\Delta (%
\mathcal{X},\pi ,1)=1,$ and, by (\ref{r1}), \ $\Delta (\mathcal{X},\pi
,t+1)-\Delta (\mathcal{X},\pi ,t)\leq 0$ for all $k\leq m-1.$ Thus $\Delta (%
\mathcal{X})=f_{2}(n,m)=1.\bigskip $

Suppose now that $\mathcal{X}$ is disconnected. With respect to Lemma \ref{7}%
, we can confine ourselves to standard orderings. As mentioned in
Preliminaries, a standard ordering of a disconnected family $\mathcal{X}$ is
an ordering where first the components of $\mathcal{X}$ are ordered and then
the sets in individual components are ordered in a standard way. For each
component $C$ of $G_{\mathcal{X}}$ we set $d(C):=\left\vert C\right\vert
-e(C).$ Obviously, $d(C)\leq 1$ for each component $C$, and if $%
C_{1},...,C_{t}$ are all components of $G$ then 
\begin{equation}
\dsum\limits_{i=1}^{t}d(C_{i})=\dsum\limits_{i=1}^{t}\left\vert
C_{i}\right\vert -e(C_{i})=\dsum\limits_{i=1}^{t}\left\vert C_{i}\right\vert
-\dsum\limits_{i=1}^{t}e(C_{i})=n-m.  \label{r3}
\end{equation}

\begin{claim}
\label{8}Let $\pi $ be a standard ordering of $\mathcal{X}$. Then 
\begin{equation*}
\Delta (\mathcal{X},\pi )=\max_{0\leq s\leq
t-1}\{1+\dsum\limits_{i=1}^{s}d(C_{\pi (i)})\}.
\end{equation*}
\end{claim}

We have shown above that if $\mathcal{X}$ is connected then $\Delta (%
\mathcal{X})=1.$ To show the statement it suffices to note that, for $k<m,$ $%
\Delta (\mathcal{X},\pi ,k)=1+\dsum\limits_{i=1}^{s}d(C_{\pi (i)}),$ where $%
\ s$ is the number for which $\dsum\limits_{i=1}^{s}e(C_{\pi (i)})\leq
k<\dsum\limits_{i=1}^{s+1}d(C_{\pi (i)}),$ and $\Delta (\mathcal{X},\pi
,m)=n-m.$ From the above claim we immediately get one of key observations:

\begin{claim}
Let $\pi $ be a standard ordering such that the components of $G_{\mathcal{X}%
}$ are ordered in the increasing way with respect to the invariant $d(C),$
and let $\tau $ be any standard ordering of $\mathcal{X}.$ Then $\Delta (%
\mathcal{X},\tau )\geq \Delta (\mathcal{X},\pi )=\Delta (\mathcal{X}).$
\end{claim}

Thus, we can confine ourselves to the ordering $\pi .$ We assume without
loss of generality that $C_{1},...,C_{t}$ is the order of components in this
ordering. Let $m\geq n-1.$ Then, by (\ref{r3}), $\dsum%
\limits_{i=1}^{s}d(C_{i})\leq 0$ for each $s<t$, and, by Lemma \ref{8}, $%
\Delta (\mathcal{X})=1.$ Assume now $m<n-1.$ Then, again by (\ref{r3}) and
Lemma \ref{8}, $\Delta (\mathcal{X},\pi )$ is maximized by a family $%
\mathcal{X}$ with all components $C$ of $\mathcal{X}$ satisfying $d(C)=1,$
thus $\Delta (\mathcal{X},\pi )$ is maximized by a family $\mathcal{X}$
where the corresponding graph $G_{\mathcal{X}\text{ }}$ possesses the
maximum possible number of non-trivial components among all graphs on $n$
vertices and $m$ edges.
\end{proof}

\section{Families with $3$-sets}

For the rest of the paper we deal only with families of $3$-sets. Thus, in $%
f_{3}(n,m)$ we will drop the subscript and write $f(n,m);$ in addition, for
the most interesting case of $n=m,$ we write only $f(n).$

\subsection{Exact values}

There are only a few values of $f(n)$ that we are able to determine
analytically. Here we state only values for $n\leq 9,$ as otherwise
determining the value $f(n)$ is too elaborate as it requires considering a
large number of cases. We start with a rather obvious result that will
simplify the proof of the next theorem.

\begin{lemma}
\label{Claim}For all $n\geq 3,$ $f(n)\leq f(n+1).$
\end{lemma}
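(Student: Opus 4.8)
The plan is to show that from a family $\mathcal{X}\in\mathcal{S}_{n,n,3}$ achieving $\Delta(\mathcal{X})=f(n)$, one can construct a family $\mathcal{X}'\in\mathcal{S}_{n+1,n+1,3}$ with $\Delta(\mathcal{X}')\geq\Delta(\mathcal{X})$. The natural move is to add one new unknown, say $x_{n+1}$, together with one new $3$-set that contains it. First I would invoke Lemma \ref{1} to assume $|X_i|=3$ for every $i$, so the construction stays inside $\mathcal{S}_{n+1,n+1,3}$ once we add a new $3$-set of full size. The new set $X_{n+1}$ should be chosen as $\{x_{n+1},a,b\}$ where $a,b$ are two of the original unknowns; this adds exactly one vertex and one set, keeping the ``$m=n$'' balance.

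The key point is that appending $X_{n+1}$ cannot decrease $\Delta$. I would argue via Lemma \ref{2a}(b): take any ordering $\pi'$ of $[n+1]$ that extends an optimal ordering $\pi$ of $[n]$ for $\mathcal{X}$ by placing $X_{n+1}$ last. Then with $k=n$ and $s=1$ in the notation of Lemma \ref{2a}, we have $\mathbf{\Delta}(\mathcal{X}',\pi',n+1)=\mathbf{\Delta}(\mathcal{X}',\pi',n)+|X_{n+1}-\bigcup_{i=1}^{n}X_{\pi(i)}|-1$, and since $x_{n+1}$ is a brand-new unknown this last quantity is $\geq 1-1=0$; also $\mathbf{\Delta}(\mathcal{X}',\pi',n)=\Delta(\mathcal{X},\pi)$ because the first $n$ sets are unchanged. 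Hence $\Delta(\mathcal{X}',\pi')\geq\Delta(\mathcal{X},\pi)=f(n)$, which forces $f(n+1)\geq\Delta(\mathcal{X}')\geq\Delta(\mathcal{X}',\pi')\geq f(n)$. Wait — that inequality goes the wrong way, since $\Delta(\mathcal{X}')=\min_{\sigma}\Delta(\mathcal{X}',\sigma)$, so exhibiting one ordering $\pi'$ only gives $\Delta(\mathcal{X}')\leq\Delta(\mathcal{X}',\pi')$, not a lower bound.

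So the real content — and the main obstacle — is to show that \emph{every} ordering of $\mathcal{X}'$ gives $\Delta$ value at least $f(n)$, i.e. $\Delta(\mathcal{X}')\geq f(n)$. The plan here is: given an arbitrary ordering $\sigma$ of $[n+1]$, delete the set $X_{n+1}$ from the sequence to obtain an induced ordering of $\mathcal{X}$, and compare partial sums. Deleting one set from position $j$ changes each partial union $\bigcup_{i=1}^{k}$ for $k\geq j$ by losing at most the contribution of $X_{n+1}$, while the index count $-k$ increases by $1$. Since $x_{n+1}$ appears only in $X_{n+1}$, removing $X_{n+1}$ from the family removes exactly the vertex $x_{n+1}$ (plus nothing else) from all later unions, so each affected $\Delta(\mathcal{X}',\sigma,k)$ decreases by at most $1$ when passing to $\mathcal{X}$; but one must track this carefully, because $a,b\in X_{n+1}$ are shared. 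Concretely, for the induced ordering $\bar\sigma$ of $\mathcal{X}$ one checks $\Delta(\mathcal{X},\bar\sigma,k)\leq\Delta(\mathcal{X}',\sigma,\cdot)$ for a suitable shifted index, giving $f(n)\leq\Delta(\mathcal{X})\leq\Delta(\mathcal{X},\bar\sigma)\leq\Delta(\mathcal{X}',\sigma)$ for all $\sigma$, hence $f(n)\leq\Delta(\mathcal{X}')\leq f(n+1)$. The delicate part is bookkeeping the off-by-one in the index when the deleted set is not last and verifying the inequality holds at the index realizing the max; I expect this to be the step requiring the most care, though it is ultimately a short case check using Lemma \ref{2}.
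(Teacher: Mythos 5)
Your proposal is correct and is essentially the paper's argument: augment the extremal family by one new set containing a single new element, then check that deleting that set from an arbitrary ordering of the augmented family yields an ordering of the original family whose $\Delta$-values are not larger, so that $\Delta(\mathcal{X}')\geq\Delta(\mathcal{X})=f(n)$. The paper makes the bookkeeping you worry about disappear by adding the singleton $\{z\}$ instead of a full $3$-set (sets of size less than $3$ are allowed in $\mathcal{S}_{n+1,n+1,3}$, so there is no need to invoke Lemma \ref{1} here), which turns your inequality into the exact identity $\Delta(\mathcal{X}',\pi')=\Delta(\mathcal{X},\pi)$.
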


\begin{proof}
Let $\mathcal{X=}\{X\,_{1},...,X_{n}\}\mathcal{\in S}_{n,n,3}$ be such that $%
\Delta (\mathcal{X)}=f(n)\,\ $and $z\notin \dbigcup\limits_{i=1}^{n}X_{i}.$
Set $\mathcal{X}^{\prime }=\mathcal{X\cup \{}z\}.$ Let $\pi ^{\prime }$ be
an ordering of sets in $\mathcal{X}^{\prime }.$ Consider the ordering $\pi $
of sets in $\mathcal{X},$ obtained by dropping the set $\{z\}$ from this
order. Then $\Delta (\mathcal{X}^{\prime },\pi ^{\prime })=\Delta (\mathcal{X%
},\pi ),$ and the statement follows.
\end{proof}

\begin{theorem}
$f(3)=2,$ and $f(n)=\left\lceil \frac{n}{3}\right\rceil $ for $4\leq n\leq
9. $
\end{theorem}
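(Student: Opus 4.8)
The plan is to prove the two equalities by establishing matching upper and lower bounds, using Lemma~\ref{Claim} (monotonicity $f(n)\le f(n+1)$) to propagate information across the small range $4\le n\le 9$ and to reduce the number of cases. First, for the lower bounds I would exhibit explicit families. For $f(3)\ge 2$, take $\mathcal{X}$ to consist of a single $3$-set $X_1=\{a,b,c\}$ repeated three times (allowed, since repetitions are permitted): any ordering gives $\Delta(\mathcal{X},\pi,1)=3-1=2$, so $\Delta(\mathcal{X})=2$. For the lower bound $f(n)\ge\lceil n/3\rceil$ when $4\le n\le 9$, I would take a disjoint union of $\lceil n/3\rceil$ "triple-copies" of a $3$-set on fresh vertex triples (padding with one or two extra repeated $3$-sets when $3\nmid n$ so that the total count of sets is exactly $n$ while the vertex count stays $n$); by Lemma~\ref{2a}(b) the parameter $\Delta$ of a disconnected family is at least the sum of the positive per-component contributions, and each "triple-copy" component contributes $+1$ no matter where it is placed, forcing $\Delta(\mathcal{X})\ge\lceil n/3\rceil$. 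One must double-check that $n$ vertices suffice to host $\lceil n/3\rceil$ disjoint triples, i.e.\ $3\lceil n/3\rceil\le n$ fails in general — so instead the components should reuse vertices cleverly, or better: use components that are single $3$-sets each repeated $3$ times but sharing no vertices, which needs $3\cdot(\text{number of components})$ vertices; since that exceeds $n$, the right construction is $\lceil n/3\rceil$ components where all but possibly one are copies of a $3$-set (3 vertices, and however many repeated sets we can afford) — I would size the components so the edge/set budget $m=n$ and vertex budget $n$ are met with each component having $e(C)\ge|C|$, guaranteeing a $+1$ contribution each. Getting this bookkeeping exactly right for each residue of $n$ mod $3$ is the first place care is needed.

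Second, for the upper bound $f(n)\le\lceil n/3\rceil$ on $4\le n\le 9$, by Lemma~\ref{Claim} it suffices to prove it at the top of each "plateau": since $\lceil n/3\rceil$ is constant on $\{4,5,6\}$ (value $2$) and on $\{7,8,9\}$ (value $3$), and $f$ is nondecreasing, it is enough to show $f(6)\le 2$ and $f(9)\le 3$. For each such $n$, I would take any $\mathcal{X}\in\mathcal{S}_{n,n,3}$, assume by Lemma~\ref{1} that every $|X_i|=3$, and analyze a standard ordering $\pi$. Using (\ref{r1}) — for a connected family, consecutive increments of $\Delta(\mathcal{X},\pi,t)$ are at most $c-2=1$ up to step $n-1$ — together with Lemma~\ref{2b} ($m\ge\lceil(n-1)/2\rceil$ for connected families), one bounds how high $\Delta$ can climb before it must come down; the disconnected case is handled component-by-component via Lemma~\ref{2a}(b), ordering components by their "deficiency" as in the proof of Theorem~\ref{3}. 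The crude bound from (\ref{r1}) gives roughly $\Delta\le 1+(\text{something like }n/2)$, which is far too weak, so the real work is a finer counting argument: track not just single-step increments but the fact that an increment of $+1$ at step $t$ "spends" two new vertices on one new set, and the total vertex budget $n$ must cover all such spendings plus the final descent forced by $m=n$ sets.

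Concretely, the key inequality I expect to need is something like: if $\pi$ is a standard ordering and $\Delta(\mathcal{X},\pi)$ is attained at step $k$, then counting vertices used gives $|\bigcup_{i\le k}X_{\pi(i)}|\le 3+2(\text{number of }+1\text{-steps among the first }k)+0\cdot(\text{number of }0\text{-steps})$, while $k$ equals the number of $+1$-steps plus the number of $0$-steps plus one (for the first set, which contributes $+2$ to $\Delta$ of value $2$), hence $\Delta(\mathcal{X},\pi,k)=2+(\#\,{+1}\text{-steps})-(\#\,{-1}\text{-steps})$; combining with $k\le m=n$ and the vertex count yields a bound of the form $\Delta\le \lceil n/3\rceil$ after optimizing over the distribution of step types. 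I would run this argument separately for connected $\mathcal{X}$ and, via Lemma~\ref{2a}(b) and the deficiency ordering, for disconnected $\mathcal{X}$, and then invoke Lemma~\ref{Claim} to fill in $n\in\{4,5,7,8\}$. The main obstacle is the upper bound: the generic increment bound (\ref{r1}) is too lossy, so one must extract the extra savings coming from the global constraints $m=n$ and $|\bigcup X_i|=n$, and verify by a short case analysis (small enough since $n\le 9$) that no configuration beats $\lceil n/3\rceil$ — in particular ruling out, at $n=6$, any family with $\Delta=3$, and at $n=9$, any family with $\Delta=4$.
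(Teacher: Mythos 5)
Your reduction of the upper bound to $f(6)\le 2$ and $f(9)\le 3$ via Lemma~\ref{Claim} matches the paper, and your $f(3)\ge 2$ example is fine, but both of your main steps have genuine gaps. The lower bound construction does not work: a component consisting of a single $3$-set repeated $r$ times peaks at $\Delta=2$ right after its first occurrence and then decreases, ending with net contribution $3-r$; it does \emph{not} contribute ``$+1$ no matter where it is placed.'' Concretely, for $n=9$ your family of three disjoint triples, each taken three times, admits the component-by-component ordering with $\Delta$-values $2,1,0,2,1,0,2,1,0$, so $\Delta(\mathcal{X})=2<3=\lceil 9/3\rceil$, and no rebalancing of component sizes under the constraints $m=n$ and $|X|=n$ repairs this (e.g.\ for $n=7$ a $3$-vertex plus $4$-vertex split also caps at $2$). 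The paper's lower bound for $n=7,8,9$ instead uses \emph{connected} families in which any two triples meet in at most one point --- the Fano plane, $9$ triples of $STS(9)$, and $STS(9)$ minus the triples through a fixed point --- so that every ordering already has $\Delta(\mathcal{X},\pi,2)\ge 5-2=3$. That ``pairwise intersections of size at most $1$'' idea is the missing ingredient.

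For the upper bound, your proposed key inequality is a tautology and yields nothing. Writing $a,b,c$ for the numbers of $+1$-, $0$-, and $-1$-steps among the first $k$ positions of a standard ordering of a connected family, you have $k=1+a+b+c$, $\Delta(\mathcal{X},\pi,k)=2+a-c$, and the vertex count $\bigl|\bigcup_{i\le k}X_{\pi(i)}\bigr|=3+2a+b$; substituting the first two into $\bigl|\bigcup_{i\le k}X_{\pi(i)}\bigr|=k+\Delta(\mathcal{X},\pi,k)$ gives $3+2a+b=3+2a+b$, so ``optimizing over the distribution of step types'' cannot produce $\Delta\le\lceil n/3\rceil$. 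The extra structure the paper exploits is constructive rather than counting-based: for connected $\mathcal{X}$ one begins the ordering with \emph{all} triples containing an edge of maximum multiplicity (for $n=6$ the $18$ edge slots among $\binom{6}{2}=15$ pairs force some multiplicity $\ge 2$), then greedily appends triples meeting the current union in at least two points, which keeps $\Delta$ at $2$ for $n=6$ and allows at most one climb to $3$ for $n=9$; the disconnected case follows from Lemma~\ref{C} because the largest component then has order at most $n-3$. As written, your proof establishes neither inequality for $4\le n\le 9$.
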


\begin{proof}
The statement is obvious for $n=3.$ First we show that, for $4\leq n\leq 9,$ 
$f(n)\geq \left\lceil \frac{n}{3}\right\rceil .$ By $f(3)=2$ and Lemma \ref%
{Claim}, $f(n)\geq 2$ for all $n;$ this proves the lower bound for $4\leq
n\leq 6.$ To see that $f(n)\geq 3$ for $n=7,8,9$ it is sufficient to take
for $\mathcal{X}$ a family such that for any two triples $X,X^{\prime }$ in $%
\mathcal{X}$ it is $\left\vert X\cap X^{\prime }\right\vert \leq 1.$ Then,
for any permutation $\pi $ we get $\Delta (\mathcal{X},\pi ,2)\geq 3,$ that
is, $\Delta (\mathcal{X})\geq 3$. We note that, for $n=7,$ the Fano plane,
and for $n=9,$ any $9$ triples of the unique Steiner triple system $STS(9)$
have the property. For $n=8$, to get the desired family of $8$ triples it
suffices to remove from $STS(9)$ all triples incident with a fixed element $%
x_{0}.$

We note that we are able to prove that $f(n)\leq \left\lceil \frac{n}{3}%
\right\rceil $ for all $n\geq 4.$ This bound is better than the bound $%
f(n)\leq \left\lceil \frac{n}{4}\right\rceil +2,$ proved in this paper, for
a few small values of $n.$ We have not included the proof of $f(n)\leq
\left\lceil \frac{n}{3}\right\rceil $ to this paper as it is quite long. To
have our paper self-contained we prove here $f(n)\leq \left\lceil \frac{n}{3}%
\right\rceil $ only for $n\leq 9.$ In view of Claim \ref{Claim}, it suffices
to show that $f(6)\leq 2,$ and $f(9)\leq 3.$

For $n\in \{6,9\},$ let $\mathcal{X\in S}_{n,n,,3}$ be such that $\Delta (%
\mathcal{X})=f(n),$ and $\left\vert X_{i}\right\vert =3$, $i\in \lbrack n],$
see Lemma \ref{1}$.$ For $\mathcal{X}$ disconnected, the inequality $\Delta (%
\mathcal{X})\leq $ $\left\lceil \frac{n}{3}\right\rceil $ follows from Lemma %
\ref{C}, as the order of the largest component $G_{\mathcal{X}}$ is at most $%
n-3.$ So now we assume that $\mathcal{X}$ is connected. We will construct in
a recursive way an ordering $\pi $ of sets in $\mathcal{X}$ such that $%
\Delta (\mathcal{X},\pi )\leq \left\lceil \frac{n}{3}\right\rceil .$ Let $%
e=\{x,y\}$ be an edge with maximum multiplicity $m(e)=M$ in $G_{\mathcal{X}}$%
. At the beginning of the order $\pi $ come all sets $X_{i}$ with $%
\{x,y\}\subset X_{i}.$ Thus, $\Delta (\mathcal{X},\pi ,k)=2$ for all $k\leq
m $ and $\left\vert \dbigcup\limits_{i=1}^{M}X_{\pi (i)}\right\vert =M+2.$
For $n=6,$ $G_{\mathcal{X}}$ has $18$ edges, thus there is an edge $e$ in $%
G_{\mathcal{X}}$ with multiplicity $m(e)>1.$ Assume that $t<n$ sets in $%
\mathcal{X}$ have been ordered. As $e$ has the maximum multiplicity, for $%
n=6,$ the set $X_{\pi (t+1)}$ can be chosen such that (\ref{r2}) is
satisfied. Thus $\Delta (\mathcal{X},\pi ,k)=2$ for all $k\leq n,$ i.e., $%
\Delta (\mathcal{X})\leq 2$.

So we are left with the case $n=9.$ After $M$ sets containing $x,y$ we
order, in a recursive way, sets, if any, satisfying (\ref{r2}). If we are
able to order in this way all sets of $\mathcal{X}$, then even $\Delta (%
\mathcal{X},\pi ,k)=2$ for all $k\leq n,$ and we are done. Otherwise, as $%
\mathcal{X}$ is connected, we are able to choose as $X_{\pi (t+1)}$ a set
satisfying $\left\vert X_{\pi (t+1)}\cap \dbigcup\limits_{i=1}^{t}X_{\pi
(i)}\right\vert =2.$ Then $\Delta (\mathcal{X},\pi ,k)\leq 3$ for all $k\leq
t+1.$ We note that in all cases, including $M=1,$ we have at this moment $t$
sets ordered with $\left\vert \dbigcup\limits_{i=1}^{t}X_{\pi
(i)}\right\vert \geq 5.$ We leave it to the reader to check that the
remaining sets can be ordered to satisfy (\ref{r2}). The proof is complete.
\end{proof}

\subsection{Lower bound}

\begin{theorem}
For $n$ sufficiently large, we get $f(n)>0.0818757697\,n$.
\end{theorem}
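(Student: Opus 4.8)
The plan is to exhibit an explicit infinite family of connected $3$-uniform families $\mathcal{X}\in\mathcal{S}_{n,n,3}$ for which every ordering is forced to accumulate $\Delta$ at a linear rate, and then optimize the construction's parameters to make the constant as close to $0.0818757697$ as possible. The natural building block is a ``gadget'' on a bounded number of vertices that is locally over-determined: a small connected $3$-uniform family $H$ on $v$ vertices with $e$ triples where $e<v$ but every sub-family of $H$ that already covers $j$ of its vertices needs ``many'' triples to cover the rest, so that processing $H$ necessarily pushes $\Delta(\mathcal{X},\pi,k)$ up by some fixed amount $\delta>0$ no matter where in $\pi$ the triples of $H$ appear. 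One then chains $n/v$ disjoint (or barely-overlapping) copies of $H$ into one connected family on $\approx n$ vertices with $\approx n$ triples, arranged so that the ``debt'' $\delta$ incurred by each copy is irreversible — i.e. cannot be paid back by later cheap triples. Summing over the $\Theta(n/v)$ copies gives $\Delta(\mathcal{X})\ge \delta\cdot\frac{n}{v}+O(1)$, and choosing the gadget that maximizes $\delta/v$ yields the constant $c_1=\delta/v$.

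Concretely, the first step is to make precise what it means for a gadget to force an increase. Using Lemma~\ref{2a}(a), if at some point the ordering has covered a set $U$ of vertices and then processes all $e$ triples of a gadget $H$ with vertex set $V_H$, the contribution to $\Delta$ from those steps is $|V_H\setminus U|-e$; and by Lemma~\ref{2a}(b) the running maximum sees not just the endpoint but every prefix. So the quantity I want to lower-bound is, over all ways of interleaving the gadget's triples with the outside, the maximum over prefixes of (vertices newly covered) minus (triples used so far). The key structural feature to engineer is that the triples of $H$ form a rigid ``expander-like'' configuration: for some threshold, any set of $e'$ triples of $H$ covers at most $e'+\delta'$ vertices of $H$ beyond what is forced, but $H$ as a whole has $v-e$ small enough (in fact negative ``excess'' $v-e<0$ is what we want for the final family to have $m=n$) — the tension between these two is exactly what produces an unavoidable bump. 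A clean candidate is to take $H$ to be (a truncation of) an STS-like or near-Steiner packing where pairwise intersections are $\le 1$, forcing $\Delta(\mathcal{X},\pi,2)\ge 3$ locally as already observed for the Fano plane in the preceding proof, and then glue many such blocks.

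The second step is the global assembly and the ``irreversibility'' argument, which is where the real work lies. After fixing the gadget, I would build $\mathcal{X}$ as a long chain $H_1,\ldots,H_t$ where consecutive gadgets share exactly one vertex (keeping the family connected and keeping the vertex count at $v_H t - (t-1)$ and the triple count at $e_H t$, then adding $\approx (v_H-e_H)t$ extra pendant triples or adjusting so that $n=m$ exactly). The claim to prove is $\Delta(\mathcal{X})\ge \delta t - O(1)$. The heart of this is: in any ordering $\pi$, consider the first time a ``large'' fraction of gadget $H_i$ has been processed; a potential/charging argument shows that the prefix value at that moment is at least $\delta$ times the number of gadgets already ``substantially entered,'' because each such gadget contributed a local surplus that no subsequent step can cancel (subsequent steps decrease $\Delta$ by at most $1$ per triple, and we have arranged that the number of triples available to ``pay down'' is outpaced by the surplus being generated). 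Formalizing this requires a careful bookkeeping of partially-processed gadgets — this is the main obstacle, since an adversarial $\pi$ will try to spread each gadget's triples thinly and pay down debts in between, so one must show the debts are structurally incompressible.

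The final step is the numerical optimization: the constant $0.0818757697\approx 1/12.2137$ strongly suggests that the extremal gadget is not tiny but a moderately sized configuration found by solving a small linear or fractional program — maximize $\delta_H/v_H$ over admissible gadgets $H$ of bounded size, where $\delta_H$ is the provable forced-increase and $v_H$ its order. I would set up this LP/combinatorial search explicitly (the non-round constant indicates it is the optimum of such a program rather than a hand-picked ratio), verify that the chaining argument indeed transfers $\delta_H/v_H$ into the asymptotic constant $c_1$ without loss, and absorb all additive $O(1)$ errors by taking $n$ (hence $t$) large, which is exactly the hypothesis ``$n$ sufficiently large'' in the statement. I expect steps one and three to be routine once the framework is set, and step two — proving the forced bump is globally irreversible against every permutation — to be the crux.
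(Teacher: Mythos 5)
Your architecture cannot work, and the obstruction is visible already in your own accounting. If you chain bounded-size gadgets $H_1,\dots,H_t$, each with $v$ vertices and $e$ triples, consecutive copies sharing one vertex, then the constraint $m=n$ forces $e\approx v-1$ per gadget, i.e.\ each gadget's excess (new vertices minus triples) is essentially zero. The ordering that processes the gadgets one after another then has, by Lemma~\ref{2a}(a), a cumulative value of $\Delta(\mathcal{X},\pi,K_j)=\bigl(v+(j-1)(v-1)\bigr)-je\approx 1$ after finishing each gadget, so the running maximum is just the \emph{within-gadget} peak, which is $O(v)=O(1)$. The ``debt'' you hope each gadget leaves behind is paid off in full by the time that gadget is completed; nothing accumulates. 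More generally, the first $j$ gadgets of your chain form a subfamily of $\approx je$ triples covering only $\approx je+1$ vertices, which is exactly the kind of dense local sub-structure that an ordering can exploit. A linear lower bound requires the opposite, \emph{global} property: every subfamily of $cn$ triples must cover at least $(c+\varepsilon)n$ vertices --- an expansion condition that no family built from bounded pieces glued along single vertices can satisfy. Your own Lemma~\ref{C} instinct (order components/blocks separately) is precisely the algorithm that kills the construction.

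The paper proves the theorem non-constructively. It takes two uniformly random, independent permutations $\pi,\tau$ of $[n]$ and sets $X_i=\{i,\pi(i),\tau(i)\}$, then shows by a first-moment (union-bound) computation that with positive probability \emph{every} $cn$-element subfamily has union of size at least $(c+\varepsilon)n$; by Lemma~\ref{2} this forces $\Delta(\mathcal{X},\lambda,cn)\ge\varepsilon n$ for every ordering $\lambda$. The probability bound reduces, via Stirling, to the entropy inequality
\begin{equation*}
(1-c)\,H\!\left(\tfrac{\varepsilon}{1-c}\right)+2(c+\varepsilon)\,H\!\left(\tfrac{c}{c+\varepsilon}\right)-H(c)<0,
\end{equation*}
and the constant $0.0818757697$ is the value of $\varepsilon$ obtained at $c=0.4590625$ --- it is the optimum of this analytic inequality, not of a linear program over gadgets as you conjectured. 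So beyond the unexecuted steps you flag yourself (no explicit gadget, no proof of irreversibility), the core difficulty is that the irreversibility you need is false for any local construction; you would have to replace the gadget chain by an explicit $3$-uniform expander, at which point you are reproving the existence statement the paper gets probabilistically.
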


\begin{proof}
We will prove the existence of a family $\mathcal{X=}\{X_{1},...,X_{n}\}\in 
\mathcal{S}_{n.n,3},$ $X_{i}\subset \{x_{1},...,x_{n}\},$ $\ $\ with the
required property $\Delta (\mathcal{X)}\geq 0.0818757697\,n$ using the
following probabilistic model: Select two permutations $\pi $ and $\tau $ on 
$[n]$ randomly and independently; that is, any permutation on $[n]$
coincides with $\pi $ and with $\tau $ with probability $1/n!$, and any
ordered pair of permutations of $[n]$ coincides with $(\pi ,\tau )$ with
probability $(1/n!)^{2}$. \ Set

\begin{equation*}
X_{i}:=\{i,\pi (i),\tau (i)\}\qquad i=1,\dots ,n.
\end{equation*}

We will prove that for $n$ sufficiently large $\mathcal{X}$ \ satisfies $%
\Delta (\mathcal{X)}\geq 0.0818757697\,n$ with a positive probability. Hence
there exists at least one set system having $\Delta (\mathcal{X)}$
sufficiently large. More precisely, we shall prove that there exist positive
constants $c$ and $\varepsilon $ with the following property: The union of
any $cn$ members of $\mathcal{X}$ have cardinality at least $(c+\varepsilon
)n$ with positive probability as $n$ gets large. For simplicity, but without
loss of generality we assume here and also below that $cn$ and $\varepsilon
n $ are integers. This implies that for any ordering $\lambda $ of members
of $\mathcal{X},$where $n$ is sufficiently large, we have $\Delta (\mathcal{X%
},\lambda ,cn)\geq \varepsilon n;$ that is $\Delta (\mathcal{X)}\geq
\varepsilon n.$ Computation will show that the requirement is satisfied if
we put $c=0,4590625$ and $\varepsilon =0.0818757697$. To prove the statement
we will show that $\mathcal{X}$ contains, with the probability strictly less
than $1,$ a subfamily $\{X_{i_{1}},...,X_{i_{m}}\}$ of $k:=cn$ members such
that their union $Y=X_{i_{1}}\cup ...\cup X_{i_{k}}$ is of cardinality at
most $(c+\varepsilon )n.$\medskip

A subfamily of $m$ members can be chosen in $\binom{n}{cn}$ ways$.$ Clearly,
by definition of $X_{i},$ $x_{i_{j}}\in Y$ for $j=1,...,m.$ Therefore, there
are $\binom{n-cn}{\varepsilon n}$ ways how to choose additional $\varepsilon
n$ elements in $Y$. Let $M=\{i_{1},...,i_{m}\}.$ Then $\pi (M)$ can be
chosen in $\binom{(c+\varepsilon )n}{cn}$ ways, and $\pi $ can be defined on 
$M$ in $(cn)!$ ways, while $\pi $ can be defined on $[n]-M$ in $((1-c)n)!$
ways. Since the permutations $\pi $ and $\tau $ have been chosen
independently, the same is valid for $\tau .$ Finally, \ the pair $(\pi
,\tau )$ has been chosen with probability $(n!)^{2}$. Thus, in aggregate,
the probability $p$ that $\mathcal{X}$ contains a subfamily of $cn$ elements
with their union being of cardinality at most $(c+\varepsilon )n$ is 
\begin{equation*}
p\leq \frac{\binom{n}{cn}\binom{n-cn}{\varepsilon n}\binom{(c+\varepsilon )n%
}{cn}^{2}(cn)!^{2}((1-c)n)!^{2}}{(n!)^{2}}
\end{equation*}

which in turn equals 
\begin{equation*}
p\leq \frac{\binom{n}{cn}\binom{n-cn}{\varepsilon n}\binom{(c+\varepsilon )n%
}{cn}^{2}}{\binom{n}{cn}^{2}}=\frac{\binom{(1-c)n}{\varepsilon n}\binom{%
(c+\varepsilon )n}{cn}^{2}}{\binom{n}{cn}}
\end{equation*}

We will calculate $c$ and $\varepsilon $ so that $p<1.$ It is well known
that from Stirling formula we get

\begin{equation*}
\ \ \frac{\log _{2}\binom{x}{ax}}{x}\rightarrow H(a)\text{ \ for }%
x\rightarrow \infty ,\text{ }\ a\text{ fixed, where }H(a)=-a\log
_{2}a-(1-a)\log _{2}(1-a).
\end{equation*}%
$\bigskip $

Thus, taking binary logarithm of $p<1$ we get that the inequality holds for
every sufficiently large $n$ if 
\begin{equation*}
\log _{2}\binom{(1-c)n}{\varepsilon n}+2\log _{2}\binom{(c+\varepsilon )n}{cn%
}-\log _{2}\binom{n}{cn}<0,
\end{equation*}

that is, if 
\begin{equation*}
\frac{1}{n}\left[\frac{1-c}{1-c}\log _{2}\binom{(1-c)n}{\frac{\varepsilon }{%
1-c}(1-c)n}+\frac{c+\varepsilon }{c+\varepsilon }2\log _{2}\binom{%
(c+\varepsilon )n}{\frac{c}{c+\varepsilon }(c+\varepsilon )n}-\log _{2}%
\binom{n}{cn}\right]<0,
\end{equation*}

hence%
\begin{equation*}
(1-c)H(\frac{\varepsilon }{1-c})+2(c+\varepsilon )H(\frac{c}{c+\varepsilon }%
)-H(c)<0.
\end{equation*}

Substituting the values $c=0.4590625$ and $\varepsilon =0.0818757697241$ one
can check that the left side of the above inequality is approximately $%
-0.0000000000005$ and hence strictly negative, consequently the probability $%
p<1$ for $n$ large enough. The proof is complete.
\end{proof}

\subsection{Upper bound}

In this section we will prove that, for all $n\geq 4,f(n)\leq \left\lceil 
\frac{n}{4}\right\rceil +2.$ We will start with a series of auxiliary upper
bounds. The first one looks to be fairly crude but for $m$ small with
respect to $n$ it is sharp.

\begin{lemma}
\label{4} For all $n,m$ we get $f(n,m)\leq 2\left\lceil \frac{n}{3}%
\right\rceil $.
\end{lemma}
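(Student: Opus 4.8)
The plan is to produce, for an arbitrary family $\mathcal{X}\in \mathcal{S}_{n,m,3}$, an ordering $\pi$ whose $\Delta$-value does not exceed $2\lceil n/3\rceil$. First I would reduce to the connected case: if $\mathcal{X}$ is disconnected with components $C_1,\dots,C_t$, then ordering the components one after another (each internally by a standard ordering) gives, by Lemma \ref{2a}(b) applied repeatedly, $\Delta(\mathcal{X},\pi)=\max_j \Delta(\text{component }C_j,\pi_j)$ up to the accumulated deficit, and since each component has order at most $n$ and fewer unknowns than $n$, a bound for the connected case of smaller order transfers. (Here one uses that a standard ordering on the first component already returns to a small $\Delta$ before the next component starts, because within a connected family a standard ordering never lets $\Delta$ increase after step $n_C-1$; cf.\ \eqref{r1}.) So it suffices to bound $\Delta(\mathcal{X})$ for $\mathcal{X}$ connected on vertex set of size $\nu:=|\bigcup_i X_i|\le n$.

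For $\mathcal{X}$ connected I would simply take any standard ordering $\pi$ and bound $\Delta(\mathcal{X},\pi,k)$ for every $k$. By Lemma \ref{2}, at each step $\Delta$ increases by at most $c-1=2$, and it increases by $2$ only when the next triple contributes $3$ brand-new vertices. Since every vertex of $\bigcup_i X_i$ is introduced at some step, and a "$+2$ step" consumes three fresh vertices while a "$+1$ step" consumes two, the number of steps at which $\Delta$ strictly increases, counted with the increment, is controlled by $\nu$: more precisely, if $a$ steps give increment $+2$ and $b$ steps give increment $+1$, then $3a+2b\le \nu$ (fresh vertices used) while the running maximum of $\Delta$ is at most $1+2a+b$ (the initial set contributes $|X_{\pi(1)}|-1\le 2$). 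Maximizing $2a+b$ subject to $3a+2b\le \nu-3$ gives $2a+b\le \tfrac{2}{3}(\nu-3)+\text{(small correction)}$, so $\Delta(\mathcal{X},\pi)\le 1+\tfrac{2\nu}{3}+O(1)$. A slightly more careful accounting — noting that after the vertex set is exhausted $\Delta$ is non-increasing along a standard ordering, and that the first set already fixes $3$ vertices for $\Delta=2$ — is exactly what yields the clean bound $2\lceil \nu/3\rceil \le 2\lceil n/3\rceil$.

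The main obstacle I anticipate is the bookkeeping at the boundary: turning the rough estimate $1+\tfrac{2}{3}(\nu-3)$ into the stated $2\lceil n/3\rceil$ requires handling the cases $\nu\equiv 0,1,2\pmod 3$ separately and checking that the ceiling absorbs the constants, and also making sure that after all $\nu$ vertices have appeared the value of $\Delta$ along the standard ordering has not already exceeded the target (which is where \eqref{r1} and Lemma \ref{2} do the work: once no new vertices remain, each further step changes $\Delta$ by $\le 0$). I would therefore present the connected case as: $\Delta(\mathcal{X},\pi)=\max_{k}\Delta(\mathcal{X},\pi,k)$ is attained at some $k$ with $\bigcup_{i=1}^{k}X_{\pi(i)}$ still a proper subset (or equal) of the vertex set, bound that maximum by the fresh-vertex count as above, and conclude $f(n,m)\le 2\lceil n/3\rceil$; the disconnected reduction then finishes the proof. $\;\rule{0.5em}{0.5em}$
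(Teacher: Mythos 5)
Your central estimate is the right one, and it is in essence the paper's own proof: at stage $k$ the union $\bigcup_{i=1}^{k}X_{\pi(i)}$ has size at most $\min\{3k,n\}$, hence $\Delta(\mathcal{X},\pi,k)\leq\min\{2k,\,n-k\}\leq 2\left\lceil \frac{n}{3}\right\rceil$ for every $k$ and for \emph{every} permutation $\pi$. Your fresh-vertex count ($a$ steps contributing three new elements, $b$ steps contributing two, so $3a+2b\leq n$ while $\Delta\leq 2a+b\leq \frac{2n}{3}$) is the same two-line computation phrased as a small optimization. The key point you should exploit is that this count needs no connectivity, no standard ordering, and no choice of $\pi$ at all: writing $a_k,b_k,c_k,d_k$ for the numbers of steps among the first $k$ that contribute $3,2,1,0$ new elements, one has $\Delta(\mathcal{X},\pi,k)=2a_k+b_k-d_k\leq 2a_k+b_k$ and $3a_k+2b_k\leq n$, for an arbitrary family and arbitrary ordering, and that already finishes the proof.

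The genuine defect is in the scaffolding you wrap around this. The reduction to the connected case is not sound as described: after a component $C_j$ is exhausted, $\Delta$ does not return to a small value before the next component starts; it sits at the accumulated deficit $\sum_{i\leq j}d(C_i)$, where $d(C_i)=|C_i|-e(C_i)$ can equal $2$ for every component, so the deficits add up rather than reset. The disconnected families are precisely where this lemma is tight: for $\left\lceil \frac{n}{3}\right\rceil$ pairwise disjoint triples, $\Delta$ after $t$ components equals $2t$, so ``the maximum of the connected bounds over components'' says nothing and only the accumulated sum matters. (Also, (\ref{r1}) does not assert that $\Delta$ is non-increasing along a standard ordering; it only bounds each increment by $c-2=1$, so the sentence invoking it proves less than you need there.) If you followed your plan literally, the disconnected case would fail. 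The fix is simply to delete the reduction and the appeal to standard orderings and run the global count of the first paragraph — which is exactly what the paper does, splitting instead into $k\leq\left\lceil \frac{n}{3}\right\rceil$ (use $|\bigcup|\leq 3k$) and $k>\left\lceil \frac{n}{3}\right\rceil$ (use $|\bigcup|\leq n$).
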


\begin{proof}
Let $\mathcal{X\in S}_{n,m,3}$ and $\pi $ be a permutation on $[m].$ For any 
$k\leq \left\lceil \frac{n}{3}\right\rceil ,$ $\Delta (\mathcal{X},\pi
,k)=\left\vert \dbigcup\limits_{i=1}^{k}X_{\pi (i)}\right\vert -k\leq
3k-k\leq 2\left\lceil \frac{n}{3}\right\rceil .$ Otherwise, $\left\vert
\dbigcup\limits_{i=1}^{k}X_{\pi (i)}\right\vert -k\leq n-k\leq
n-(\left\lceil \frac{n}{3}\right\rceil +1)\leq 2\left\lceil \frac{n}{3}%
\right\rceil .$
\end{proof}

\medskip A better bound can be obtained if $m$ is sufficiently large. Also
in this case the bound is sharp for some values of $m.$

\begin{lemma}
\label{5}For $m\geq \left\lceil \frac{n-1}{2}\right\rceil ,$ we have $%
f(n,m)\leq \left\lfloor \frac{n+1}{2}\right\rfloor .$
\end{lemma}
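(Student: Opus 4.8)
I would prove Lemma~\ref{5} by exhibiting, for a given $\mathcal{X}\in S_{n,m,3}$ with $m\geq\lceil\frac{n-1}{2}\rceil$, a permutation $\pi$ with $\Delta(\mathcal{X},\pi)\leq\lfloor\frac{n+1}{2}\rfloor$. By Lemma~\ref{1} I may assume $|X_i|=3$ for all $i$. The idea is to run a \emph{standard ordering} in which, as long as possible, each newly added triple contributes at most one new vertex to the union (i.e.\ satisfies a condition like~(\ref{r2}) but with intersection $\geq 2$), so that $\Delta$ increases by at most $0$ at such a step, by Lemma~\ref{2}. The point where this is impossible is when the set of already-covered vertices induces a ``saturated'' subfamily: every remaining triple meets it in at most one vertex. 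At such a moment the partial union has some size $v$, we have placed $t$ triples with $v-t\leq\Delta$ so far, and I must bound how large $v$ (equivalently $\Delta$) can be.

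First I would set up the bookkeeping: let $\pi$ be a standard ordering, and let $k_0$ be the last index with $\Delta(\mathcal{X},\pi,k_0)=\max_k\Delta(\mathcal{X},\pi,k)$; put $v=|\bigcup_{i=1}^{k_0}X_{\pi(i)}|$, so $\Delta(\mathcal{X})\le\Delta(\mathcal{X},\pi)=v-k_0$. From $-1\le\Delta(\mathcal{X},\pi,k+1)-\Delta(\mathcal{X},\pi,k)$ (the inequality stated just before Lemma~\ref{2}) and the fact that all $m$ triples must eventually be placed, covering all of $X:=\bigcup_{1}^m X_i$ with $|X|\le n$, I get $v-k_0 \le |X| - m \le n-m$. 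Combined with $m\ge\lceil\frac{n-1}{2}\rceil$ this already yields $\Delta(\mathcal{X})\le n-m\le n-\frac{n-1}{2}=\frac{n+1}{2}$, and with the floor/ceiling parity bookkeeping one gets exactly $\lfloor\frac{n+1}{2}\rfloor$. Second, I would handle the disconnected case by restricting to the largest component (Lemma~\ref{2a} reduces $\Delta$ of the whole family to a max over prefixes built component by component, and within a component Lemma~\ref{2b} controls the number of edges/triples needed), so it suffices to prove the bound component-wise; a component on $n'$ vertices with $m'$ triples with $m'\ge\lceil\frac{n'-1}{2}\rceil$ is handled by the connected argument, and small leftover components only decrease the maximum.

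The main obstacle is making the crude estimate $\Delta(\mathcal{X},\pi)\le |X|-m$ actually tight enough and verifying the floor: the naive bound gives $\Delta \le n-m$, which for $m=\lceil\frac{n-1}{2}\rceil$ is $n-\lceil\frac{n-1}{2}\rceil=\lfloor\frac{n+1}{2}\rfloor$ when $n$ is odd but needs a careful parity check when $n$ is even, and for larger $m$ the bound only improves. The subtle point is that $|X|$ might be strictly less than $n$ (there may be isolated vertices not in any $X_i$), which only helps. So really the lemma follows from the single inequality $\Delta(\mathcal{X},\pi)\le|X|-m$ for \emph{any} ordering $\pi$, which is immediate from $\Delta(\mathcal{X},\pi,m)=|X|-m$ together with the monotonicity that from step $k$ to step $m$ the quantity $\Delta(\mathcal{X},\pi,k)-k$-part can only... — more precisely, for the specific standard ordering, once all triples are used $\Delta(\mathcal{X},\pi,m)=|X|-m$, and one shows the running maximum is attained at $k=m$ or is at most this value, using that a standard ordering never lets $\Delta$ jump by the full $c-1=2$ after the connected union has formed (cf.\ (\ref{r1})). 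Writing this monotonicity cleanly, with the component reduction, is the only real work.
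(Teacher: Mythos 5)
Your central inequality $\Delta(\mathcal{X},\pi)\leq |X|-m$ is false, and the derivation you sketch for it does not go through. From $\Delta(\mathcal{X},\pi,k+1)-\Delta(\mathcal{X},\pi,k)\geq -1$ and $\Delta(\mathcal{X},\pi,m)=|X|-m$ you can only conclude $v-k_0\leq |X|-m+(m-k_0)=|X|-k_0$, i.e.\ the triviality $v\leq|X|$; the running maximum is in general \emph{not} attained at $k=m$ and is \emph{not} bounded by the terminal value. A concrete counterexample to $\Delta(\mathcal{X},\pi)\leq n-m$: any family with $m>n-2$ (e.g.\ $n=m=6$) has $\Delta(\mathcal{X},\pi,1)=2>n-m$ for every ordering. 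Your opening plan does gesture at the right mechanism --- that in a standard ordering of a connected family $\Delta$ starts at $2$, never increases by more than $1$, and increases by $1$ only when the new triple brings in $2$ fresh vertices, so there are at most $\lfloor\frac{n-3}{2}\rfloor$ such increases, giving $2+\lfloor\frac{n-3}{2}\rfloor=\lfloor\frac{n+1}{2}\rfloor$ --- and that is exactly the paper's connected-case argument, but your executed bookkeeping abandons it in favour of the broken $|X|-m$ bound.

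The disconnected case is also not dispatched by ``small leftover components only decrease the maximum.'' Each component $C$ contributes $d(C)=|C|-e(C)$ to the running count, and this can be positive (a single triple contributes $+2$), so the contributions can accumulate: $t$ pairwise disjoint triples force $\Delta=2t$, which is why the hypothesis $m\geq\lceil\frac{n-1}{2}\rceil$ is genuinely needed and is used precisely here. The paper handles this by introducing the surplus $\gamma(C)=e(C)-\lceil\frac{|C|-1}{2}\rceil\geq 0$ (nonnegativity from Lemma~\ref{2b}), ordering components by decreasing $\gamma$, and using $\sum_i\gamma(C_i)=m-\frac{n}{2}+\frac{odd}{2}\geq\lfloor\frac{odd}{2}\rfloor$ to offset the per-component losses against the global bound. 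You would need an argument of this kind; restricting to the largest component does not suffice.
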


\begin{proof}
Let $\mathcal{X}=\{X_{1},...,X_{m}\}$ be a family of sets such that $\Delta (%
\mathcal{X)=}f(n,m).$ By Lemma \ref{1}, we assume that $\left\vert
X_{i}\right\vert =3$ for all $i\in \lbrack m].$ Consider first a case when $%
\mathcal{X}$ is connected, and let $\pi $ be a standard ordering of $%
\mathcal{X}$. Then $\Delta (\mathcal{X},\pi ,1)=2,$ and by (\ref{r1}), $%
\Delta (\mathcal{X},\pi ,t+1)-\Delta (\mathcal{X},\mathcal{\pi },t)\leq 1,$
for all $t\geq 1.$ As $\Delta (\mathcal{X},\pi ,t+1)-\Delta (\mathcal{X},%
\mathcal{\pi },t)=1$ implies $\left\vert X_{\pi
(t+1)}-\dbigcup\limits_{i=1}^{t}X_{\pi (i)}\right\vert =2,$ we have that $%
\Delta (\mathcal{X},\pi )\leq 2+\left\lfloor \frac{n-3}{2}\right\rfloor
=\left\lfloor \frac{n+1}{2}\right\rfloor ,$ thus $\Delta (\mathcal{X})\leq
\left\lfloor \frac{n+1}{2}\right\rfloor .$\medskip

Now let $\mathcal{X}$ be disconnected and $C_{1},...,C_{s}$ be components of
the graph $G_{\mathcal{X}}.$ By Lemma \ref{2b}, a connected family $\mathcal{%
Y\in S}_{n,m,3}$ has to contain at least $\left\lceil \frac{n-1}{2}%
\right\rceil $ triples. We define $\gamma (C_{i})=e(C_{i})-\left\lceil \frac{%
\left\vert C_{i}\right\vert -1}{2}\right\rceil .$ Thus, $\gamma (C_{i})\geq
0 $ for all $i\in \lbrack s].$ Moreover, $\dsum\limits_{i=1}^{s}\gamma
(C_{i})= $ $\dsum\limits_{i=1}^{s}(e(C_{i})-\left\lceil \frac{\left\vert
C_{i}\right\vert -1}{2}\right\rceil )=m-\dsum\limits_{\left\vert
C_{i}\right\vert \text{ odd}}\left\lceil \frac{\left\vert C_{i}\right\vert -1%
}{2}\right\rceil -\dsum\limits_{\left\vert C_{i}\right\vert \text{ even}%
}\left\lceil \frac{\left\vert C_{i}\right\vert -1}{2}\right\rceil =$ $%
m-\dsum\limits_{\left\vert C_{i}\right\vert \text{ odd}}\frac{\left\vert
C_{i}\right\vert -1}{2}-\dsum\limits_{\left\vert C_{i}\right\vert \text{ even%
}}\frac{\left\vert C_{i}\right\vert }{2}=m-\frac{n}{2}+\frac{odd}{2}\geq
\left\lceil \frac{n-1}{2}\right\rceil -\frac{n}{2}+\frac{odd}{2}%
=\left\lfloor \frac{odd}{2}\right\rfloor ,$ where $odd$ is the number of the
odd order components in $G_{\mathcal{X}}.$ Let $\pi $ be a standard ordering
on $\mathcal{X}$ where the components $C_{i}$ are ordered in the decreasing
manner with respect to $\gamma ;$ without loss of generality we assume that $%
C_{1},...,C_{s}$ is this ordering. Then $\dsum\limits_{i=1}^{s}\gamma
(C_{i})\geq \left\lfloor \frac{odd}{2}\right\rfloor $ implies that, for all $%
t\leq s,$

\begin{equation}
\dsum\limits_{i=1}^{t}\gamma (C_{i})\geq \min \{t,\left\lfloor \frac{odd}{2}%
\right\rfloor \}.  \label{odd}
\end{equation}

Now we show that for every $k,1\leq k\leq m,$ we have $\Delta (\mathcal{X}%
,\pi ,k)\leq \left\lfloor \frac{n+1}{2}\right\rfloor .$ For $k<m,$ there is
a unique $t$ such that $\dsum\limits_{i=1}^{t}e(C_{i})\leq
k<\dsum\limits_{i=1}^{t+1}e(C_{i}).$ By Lemma \ref{2a}(a) we get%
\begin{equation*}
\Delta (\mathcal{X},\pi ,k)=\Delta (\mathcal{X},\pi
,\dsum\limits_{i=1}^{t}e(C_{i}))+\Delta (\mathcal{X}_{i},\pi
_{i},k-\dsum\limits_{i=1}^{t}e(C_{i})),
\end{equation*}

where $\mathcal{X}_{i}$ is a subfamily of $\mathcal{X}$ comprising triples
that are subsets of $C_{i},$ and $\pi _{i}$ is the restriction of $\pi $ to $%
\mathcal{X}_{i}$. From the case of a connected family $\mathcal{X}$
discussed above we have $\Delta (\mathcal{X}_{i},\pi
_{i},k-\dsum\limits_{i=1}^{t}e(C_{i}))\leq \left\lfloor \frac{\left\vert
C_{i}\right\vert +1}{2}\right\rfloor .$ Denote by $odd_{t}$ the number of
odd components among $C_{1},...,C_{t}.$ Then, $\Delta (\mathcal{X}\mathbf{,}%
\pi ,k\mathbf{)}\mathbf{\leq }\dsum\limits_{i=1}^{t}\left\vert
C_{i}\right\vert -\dsum\limits_{i=1}^{t}e(C_{i})+\left\lfloor \frac{%
\left\vert C_{t+1}\right\vert +1}{2}\right\rfloor \leq \left\lfloor \frac{%
\left\vert C_{t+1}\right\vert +1}{2}\right\rfloor
+\dsum\limits_{i=1}^{t}(\left\lfloor \frac{\left\vert C_{i}\right\vert +1}{2}%
\right\rfloor -\gamma (C_{i}))\leq $

$\left\lfloor \frac{\left\vert C_{t+1}\right\vert +1}{2}\right\rfloor
+\dsum\limits_{\left\vert C_{i}\right\vert \text{even}}\frac{\left\vert
C_{i}\right\vert }{2}+\dsum\limits_{\left\vert C_{i}\right\vert \text{odd}}%
\frac{\left\vert C_{i}\right\vert }{2}+\frac{odd_{t}}{2}-\dsum%
\limits_{i=1}^{t}\gamma (C_{i})\leq \left\lfloor \frac{n+1}{2}\right\rfloor $
as,

by (\ref{odd}), $\frac{odd_{t}}{2}\leq \dsum\limits_{i=1}^{t}\gamma (C_{i}),$
and$\dsum\limits_{\left\vert C_{i}\right\vert \text{even}}\frac{\left\vert
C_{i}\right\vert }{2}+\dsum\limits_{\left\vert C_{i}\right\vert \text{odd}}%
\frac{\left\vert C_{i}\right\vert }{2}+\left\lfloor \frac{\left\vert
C_{t+1}\right\vert +1}{2}\right\rfloor \leq \left\lfloor \frac{n+1}{2}%
\right\rfloor $.
\end{proof}

\medskip The next auxiliary bound deals with the case when $\mathcal{X}$ is
disconnected.

\begin{lemma}
\label{C}If $\mathcal{X\in S}_{n,m,3}$ is disconnected, and $m\geq n$, then $%
\Delta (\mathcal{X})\leq \left\lfloor \frac{c+1}{2}\right\rfloor ,$ where $c$
is the order of the largest component of $G_{\mathcal{X}}$.
\end{lemma}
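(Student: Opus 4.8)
The statement to prove is Lemma~\ref{C}: if $\mathcal{X}\in\mathcal{S}_{n,m,3}$ is disconnected with $m\ge n$, and $c$ denotes the order of the largest component of $G_{\mathcal{X}}$, then $\Delta(\mathcal{X})\le\left\lfloor\frac{c+1}{2}\right\rfloor$. The natural approach is to reduce to the connected case already handled in Lemma~\ref{5} and then bootstrap using the additivity of $\Delta$ across components (Lemma~\ref{2a}(a)). The plan is as follows. Write $C_1,\dots,C_s$ for the components of $G_{\mathcal{X}}$, let $n_i=|C_i|$ and $m_i=e(C_i)$, so $\sum n_i=n$, $\sum m_i=m\ge n$. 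Since $c=\max_i n_i$, it suffices to bound, for every ordering of the components and every prefix, the quantity $\sum_{i\le t}(n_i-m_i)$ plus the partial contribution $\Delta$ of the next component, and show this never exceeds $\left\lfloor\frac{c+1}{2}\right\rfloor$.

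First I would order the components by a suitable invariant — as in the proof of Lemma~\ref{5}, set $\gamma(C_i)=m_i-\left\lceil\frac{n_i-1}{2}\right\rceil\ge 0$ (nonnegativity is Lemma~\ref{2b}), order the components in decreasing order of $\gamma$, and pick a standard ordering within each component. The key estimate, exactly as in Lemma~\ref{5}, is that for a prefix ending partway through $C_{t+1}$,
\[
\Delta(\mathcal{X},\pi,k)\le\left\lfloor\frac{n_{t+1}+1}{2}\right\rfloor+\sum_{i=1}^{t}\Bigl(\left\lfloor\tfrac{n_i+1}{2}\right\rfloor-\gamma(C_i)\Bigr),
\]
using Lemma~\ref{2a}(a) to split off the completed components and the connected case of Lemma~\ref{5} for the in-progress one. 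Now $\left\lfloor\frac{n_i+1}{2}\right\rfloor-\gamma(C_i)$ is $0$ when $n_i$ is odd and $\tfrac12$ when $n_i$ is even; so the summed term is at most $\tfrac12\,\mathrm{odd}'$, where $\mathrm{odd}'$ counts even components among $C_1,\dots,C_t$ — wait, rather it contributes at most half the number of even-order components, so I must instead track the global slack. The point is that $\sum_{i=1}^s\gamma(C_i)=m-\tfrac n2+\tfrac{\mathrm{odd}}{2}\ge \tfrac{\mathrm{odd}}{2}+(m-n)/1$-ish; since $m\ge n$ we actually get $\sum\gamma(C_i)\ge\left\lceil\frac{n}{2}\right\rceil-\tfrac n2+\tfrac{\mathrm{odd}}{2}$ strengthened by the extra $m-n\ge 0$, which is enough to absorb all the $\tfrac12$'s coming from even components and leave $\Delta$ bounded by $\left\lfloor\frac{n_{t+1}+1}{2}\right\rfloor\le\left\lfloor\frac{c+1}{2}\right\rfloor$.

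The cleanest organization is: (i) handle $k=m$ separately, where $\Delta(\mathcal{X},\pi,m)=n-m\le 0\le\left\lfloor\frac{c+1}{2}\right\rfloor$; (ii) for $k<m$, use the displayed prefix bound and show the right-hand side is $\le\left\lfloor\frac{n_{t+1}+1}{2}\right\rfloor$ whenever the components are ordered by decreasing $\gamma$ and $m\ge n$, via the analogue of inequality~(\ref{odd}) adapted to count even components rather than odd ones, together with $\sum_{i=1}^s\gamma(C_i)\ge(m-n)+\left\lfloor\frac{\text{(number of even components)}}{1}\right\rfloor$-type bookkeeping; (iii) conclude $\left\lfloor\frac{n_{t+1}+1}{2}\right\rfloor\le\left\lfloor\frac{c+1}{2}\right\rfloor$ since $n_{t+1}\le c$.

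The main obstacle I anticipate is the parity bookkeeping in step~(ii): one must verify that ordering by decreasing $\gamma$ really guarantees, for every prefix length $t$, that the accumulated $\sum_{i\le t}\gamma(C_i)$ dominates the accumulated half-integer penalties $\tfrac12\cdot\#\{i\le t: n_i\text{ even}\}$ — and this hinges on the hypothesis $m\ge n$ being used to give enough total slack $\sum\gamma(C_i)$, not merely $m\ge\left\lceil\frac{n-1}{2}\right\rceil$ as in Lemma~\ref{5}. One should double-check the extreme case where many small even components precede a large odd one, to be sure the greedy (decreasing-$\gamma$) order does not misbehave; if it does, a direct convexity/exchange argument replacing the greedy order by the worst-case prefix should still close the gap, since the bound to beat, $\left\lfloor\frac{c+1}{2}\right\rfloor$, is governed only by the single largest component.
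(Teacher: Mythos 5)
Your overall architecture is the right one --- split the ordering into completed components plus one in-progress component via Lemma~\ref{2a}(a), bound the in-progress component by the connected case of Lemma~\ref{5}, and control the accumulated contribution of the completed ones --- but the step you yourself flag as worrisome is exactly where the argument breaks. First, your bookkeeping term collapses: $\left\lfloor \frac{n_i+1}{2}\right\rfloor -\gamma (C_i)=\left\lfloor \frac{n_i+1}{2}\right\rfloor +\left\lceil \frac{n_i-1}{2}\right\rceil -m_i=n_i-m_i=d(C_i)$, the deficit of the component; it is not a half-integer parity penalty, and it can be as large as $\left\lfloor \frac{n_i+1}{2}\right\rfloor$ for a sparse component. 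So what you actually need is that the chosen component order makes every prefix sum $\sum_{i\leq t}d(C_i)\leq 0$. Ordering by decreasing $\gamma$ does not guarantee this, because $\gamma$ measures excess over the connectivity minimum and is blind to size: a large sparse component (hugely positive $d$, modestly positive $\gamma$) can outrank many tiny dense components (very negative $d$, but slightly smaller $\gamma$). Concretely, take $C_1$ with $100$ vertices and $60$ triples ($\gamma =10$, $d=40$), $C_2$ with $100$ vertices and $59$ triples ($\gamma =9$, $d=41$), and fourteen components each consisting of nine copies of a single triple ($\gamma =8$, $d=-6$); then $n=242\leq m=245$, decreasing-$\gamma$ order forces $C_1$ and $C_2$ to come first, and partway through $C_2$ your displayed bound only yields $40+\left\lfloor \frac{101}{2}\right\rfloor =90$, whereas the target is $\left\lfloor \frac{c+1}{2}\right\rfloor =50$. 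If $C_1,C_2$ are ``caterpillars'' of triples, $\Delta$ under that ordering genuinely climbs to about $90$, so this is not slack in the estimate --- the ordering is simply not a witness for the claimed bound. The parity accounting you import from Lemma~\ref{5} can only recover the global bound $\left\lfloor \frac{n+1}{2}\right\rfloor$, not the per-component bound asserted in Lemma~\ref{C}.

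The paper's proof avoids all of this by sorting on the right invariant: order the components by \emph{increasing} $d(C)=\left\vert C\right\vert -e(C)$. Since $\sum_{i=1}^{s}d(C_i)=n-m\leq 0$ and the terms are sorted increasingly, every prefix has average at most the overall average, hence $\sum_{i=1}^{t}d(C_i)\leq 0$ for all $t$. Then for $k<m$, Lemma~\ref{2a}(a) gives $\Delta (\mathcal{X},\pi ,k)=\sum_{i=1}^{t-1}d(C_i)+\Delta (\mathcal{X}_t,\pi _t,k-a)\leq 0+\left\lfloor \frac{\left\vert C_t\right\vert +1}{2}\right\rfloor \leq \left\lfloor \frac{c+1}{2}\right\rfloor$ by the connected case of Lemma~\ref{5}, and $\Delta (\mathcal{X},\pi ,m)=n-m\leq 0$. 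This is where the hypothesis $m\geq n$ is used, cleanly and in full, with no parity bookkeeping at all. Your proposal would need to replace the decreasing-$\gamma$ order by this increasing-$d$ order (or supply the unspecified ``exchange argument''); as written it does not close.
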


\begin{proof}
Let $C_{1},...,C_{s}$ be components of $G_{\mathcal{X}}\,,$ and let $%
\mathcal{X}_{i}$ be the subfamily of $\mathcal{X}$ comprising triples that
are subsists of $C_{i}.$ As in the proof of Theorem \ref{3}, \ we set $%
d(C)=\left\vert C_{i}\right\vert -e(C_{i})$ and get $\dsum%
\limits_{i=1}^{s}d(C_{i})=n-m\leq 0.$ Consider the standard ordering $\pi $
of $\mathcal{X}$ such that the components are ordered in the increasing way
with respect to the invariant $d;$ without loss of generality. we assume
that $C_{1},...,C_{s}$ is such ordering. Then $\dsum%
\limits_{i=1}^{t}d(C_{i})\leq 0$ for any $t\leq s.$ Let $\pi _{i}$ be a
restriction of $\pi $ to $\mathcal{X}_{i}.$ Then, by Lemma \ref{5}, for each
component $C_{i},$ we have

\begin{equation*}
\max_{1\leq k\leq e(C_{i})}\Delta (\mathcal{X}_{i},\pi _{i},k)\leq
\left\lfloor \frac{\left\vert C_{i}\right\vert +1}{2}\right\rfloor
\end{equation*}%
Further, $\ \Delta (\mathcal{X}_{i},\pi _{i},e(C_{i}))=\left\vert
C_{i}\right\vert -e(C_{i})=d(C_{i}).$ Extending this conclusion to $\pi $ we
have:$\mathcal{\ }$If $a$ is the total number of triples in the first $t$
components, then

\begin{equation*}
\Delta (\mathcal{X},\pi ,a)=\dsum\limits_{i=1}^{t}d(C_{i})\leq 0.
\end{equation*}

Let $k<m.$ Then there is a uniquely determined number $t$ such that $%
\dsum\limits_{i=1}^{t-1}e(C_{i})\leq k<\dsum\limits_{i=1}^{t}e(C_{i}).$ Set $%
a=\dsum\limits_{i=1}^{t-1}e(C_{i}).$ By Lemma \ref{2a}(a), $\Delta (\mathcal{%
X},\pi ,k)=\Delta (\mathcal{X},\pi ,a)+\Delta (\mathcal{X}_{t},\pi
_{t},k-a)\leq \Delta (\mathcal{X}_{t},\pi _{t},k-a)\leq \left\lfloor \frac{%
\left\vert C_{i}\right\vert +1}{2}\right\rfloor $. The proof is complete.
\end{proof}

Before proving the upper bound we state one more lemma.

\begin{lemma}
\label{6}Let $\mathcal{X\in S}_{n,n,3},$ and let $\Delta (\mathcal{X},\pi
,k)=\left\lceil \frac{n}{4}\right\rceil +1.$ Then there is an $\varepsilon
\geq 0$ such that $k=\left\lceil \frac{n}{4}\right\rceil +\varepsilon ,$ and 
$\left\vert \dbigcup\limits_{i=1}^{k}X_{\pi (i)}\right\vert =\left\lceil 
\frac{n}{2}\right\rceil +\varepsilon .$
\end{lemma}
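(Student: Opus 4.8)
The plan is to read off both assertions from one fact: along a suitably chosen ordering $\Delta(\mathcal{X},\pi,\cdot)$ can increase by at most $1$ per step. By Lemma~\ref{1} we may assume $|X_i|=3$ for all $i$, so that $\Delta(\mathcal{X},\pi,1)=2$. Assume first that $\mathcal{X}$ is connected and that $\pi$ is a standard ordering. Then $(\ref{r1})$ with $c=3$ gives $\Delta(\mathcal{X},\pi,t+1)-\Delta(\mathcal{X},\pi,t)\le 1$ for every $t\le n-1$, so telescoping from $t=1$ yields
\begin{equation*}
\Delta(\mathcal{X},\pi,k)\ \le\ 2+(k-1)\ =\ k+1\qquad\text{for all }1\le k\le n.
\end{equation*}
Hence the hypothesis $\Delta(\mathcal{X},\pi,k)=\left\lceil\frac{n}{4}\right\rceil+1$ forces $\left\lceil\frac{n}{4}\right\rceil+1\le k+1$, that is, $k\ge\left\lceil\frac{n}{4}\right\rceil$; so $\varepsilon:=k-\left\lceil\frac{n}{4}\right\rceil$ is a nonnegative integer and $k=\left\lceil\frac{n}{4}\right\rceil+\varepsilon$, which is the first assertion.

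The second assertion is then just the definition of $\Delta$ rewritten: since $\Delta(\mathcal{X},\pi,k)=\left\vert\dbigcup\limits_{i=1}^{k}X_{\pi(i)}\right\vert-k$, the hypothesis gives
\begin{equation*}
\left\vert\dbigcup\limits_{i=1}^{k}X_{\pi(i)}\right\vert\ =\ k+\left\lceil\frac{n}{4}\right\rceil+1 ,
\end{equation*}
and substituting $k=\left\lceil\frac{n}{4}\right\rceil+\varepsilon$ and simplifying the resulting ceiling expression gives the cardinality asserted in the statement.

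It remains to see that no generality is lost in assuming $\mathcal{X}$ connected and $\pi$ standard. If $\mathcal{X}$ is disconnected then, since we are in the regime $m=n$, Lemma~\ref{C} already bounds $\Delta(\mathcal{X})$ in terms of the order of the largest component; more in the spirit of the argument above, one can order the components first --- taking the standard ordering in which they appear in increasing order of the invariant $d(C):=\left\vert C\right\vert-e(C)$, so that $\dsum_{i=1}^{t}d(C_{i})\le 0$ for every $t$ because $\dsum_i d(C_i)=n-m=0$ --- and then use Lemma~\ref{2a}(a) to split $\Delta(\mathcal{X},\pi,k)$ into the nonpositive contribution of the completed components plus the contribution inside the component containing step $k$, the latter obeying the same $k+1$ estimate as in the connected case. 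This again gives $k\ge\left\lceil\frac{n}{4}\right\rceil$, and the cardinality formula follows as before.

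The computations here --- the telescoping and the ceiling bookkeeping --- are routine. The point that really needs attention, and the reason the connected/disconnected split cannot be skipped, is justifying the ``increase by at most $1$'' step: for a carelessly chosen standard ordering of a disconnected family (for instance one that processes several pairwise disjoint triples first) $\Delta$ can jump well past $k+1$ at a small index $k$, so the proof must pin down which ordering is in force. That is the main, and essentially organizational, obstacle.
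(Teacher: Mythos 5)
Your central computation is the same as the paper's: for a standard ordering $\tau$ of a connected family one has $\Delta(\mathcal{X},\tau,1)=2$ and each step increases $\Delta$ by at most $1$, whence $\Delta(\mathcal{X},\tau,s)\le s+1$. The gap is in how you deploy it. The lemma speaks of a \emph{given} pair $(\pi,k)$ with $\Delta(\mathcal{X},\pi,k)=\left\lceil \frac{n}{4}\right\rceil+1$, and its conclusions ($k\ge\left\lceil \frac{n}{4}\right\rceil$ and the size of $\bigcup_{i=1}^{k}X_{\pi(i)}$) concern that particular $\pi$ and $k$; you cannot pass ``without loss of generality'' to a standard ordering, because doing so changes both $k$ and the union. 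Read for an arbitrary $\pi$ the statement is in fact false: an ordering that begins with $k$ pairwise disjoint triples (possible even in a connected family) has $\Delta(\mathcal{X},\pi,k)=2k$, which attains $\left\lceil \frac{n}{4}\right\rceil+1$ already at $k\approx n/8<\left\lceil \frac{n}{4}\right\rceil$. What rescues the lemma is the context in which the paper proves and uses it: $\pi$ and $k$ are the extremal pair fixed in the proof of the upper-bound theorem, with $k$ chosen \emph{maximal} over all orderings whose $\Delta$ stays at most $\left\lceil \frac{n}{4}\right\rceil+1$ until first attaining that value. The standard-ordering computation then merely exhibits one ordering whose first crossing occurs at index at least $\left\lceil \frac{n}{4}\right\rceil$, and the maximality of $k$ (the paper's ``by the definition of $\pi$ and $k$'') transfers that lower bound to $k$ itself. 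Your argument never invokes this extremality, so the estimate you prove for your constructed ordering never reaches the given $(\pi,k)$. You clearly sense the problem in your last paragraph, but calling it ``essentially organizational'' and proposing to ``pin down which ordering is in force'' does not close it --- the ordering is not yours to choose.

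Two smaller points. In the disconnected case the paper does not use your increasing-$d(C)$ component ordering; it uses the standing hypothesis $\Delta(\mathcal{X})>\left\lceil \frac{n}{4}\right\rceil+2$ together with Lemma~\ref{C} to force a component of order at least $\left\lceil \frac{n}{2}\right\rceil+2$, and then repeats the connected argument inside that component. Your route to $\Delta(\mathcal{X},\pi',t)\le t+1$ via nonpositive partial sums of $d(C_i)$ is a correct computation, but it inherits the same transfer problem. Finally, ``simplifying the resulting ceiling expression'' does not yield $\left\vert\bigcup_{i=1}^{k}X_{\pi(i)}\right\vert=\left\lceil \frac{n}{2}\right\rceil+\varepsilon$ with the same $\varepsilon$: one gets $2\left\lceil \frac{n}{4}\right\rceil+1+\varepsilon\ge\left\lceil \frac{n}{2}\right\rceil+1+\varepsilon$, and the paper's own proof accordingly asserts only an inequality; this is a harmless imprecision of the lemma's statement, but your proof should not present the equality as falling out of arithmetic.
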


\begin{proof}
First let $\mathcal{X}$ be connected, and $\tau $ be a standard ordering of $%
\mathcal{X}.$ By (\ref{r1}), for $s\geq 1$ we have $\Delta (\mathcal{X},\tau
,s+1)-\Delta (\mathcal{X},\mathcal{\tau },s)\leq 1.$ Therefore $\Delta (%
\mathcal{X},\tau ,s)\leq 2+(s-1)=s+1.$ Thus, if $\Delta (\mathcal{X},\tau
,s)=\left\lceil \frac{n}{4}\right\rceil +1$ then $s\geq \left\lceil \frac{n}{%
4}\right\rceil .$ Then, by the definition of $\pi $ and $k$ it is $k\geq
\left\lceil \frac{n}{4}\right\rceil ,$ i.e., $k=\left\lceil \frac{n}{4}%
\right\rceil +\varepsilon \,$\ for some $\varepsilon \geq 0,$ which in turn
implies $\left\vert M\right\vert =k+\Delta (\mathcal{X},\pi ,k)\geq
\left\lceil \frac{n}{2}\right\rceil +\varepsilon .\bigskip $

Now let $\mathcal{X}$ be disconnected. By Lemma \ref{C}, $\Delta (\mathcal{%
X)\leq }\left\lfloor \frac{\left\vert C\right\vert +1}{2}\right\rfloor ,$
where $C$ is the largest component of $G_{\mathcal{X}}.$ As $\Delta (%
\mathcal{X)>}\left\lceil \frac{n}{4}\right\rceil +2,$ we get $\left\vert
C\right\vert \geq \left\lceil \frac{n}{2}\right\rceil +2.$ Since the
subfamily of $\mathcal{X}$ with its triples in $C$ is connected, $e(C)\geq
\left\lceil \frac{\left\vert C\right\vert -1}{2}\right\rceil \geq
\left\lceil \frac{n}{4}\right\rceil .$ Now it suffices to repeat the
argument used in the case $\mathcal{X}$ is connected.
\end{proof}

\begin{theorem}
For all $n\geq 4,$ $f(n)\leq \left\lceil \frac{n}{4}\right\rceil +2.$
\end{theorem}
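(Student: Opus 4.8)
The plan is to reduce the theorem to a statement about connected families and then argue by a careful choice of ordering combined with a potential/counting argument. First I would dispose of the disconnected case: if $\mathcal{X}\in\mathcal{S}_{n,n,3}$ is disconnected, then by Lemma~\ref{C} we have $\Delta(\mathcal{X})\le\lfloor\frac{c+1}{2}\rfloor$ where $c$ is the order of the largest component, and since the largest component has order at most $n-1$ (there being at least one further vertex in another component, or more precisely $c\le n-1$ when there are at least two nontrivial components, and $c\le n$ trivially with the complement handled by Lemma~\ref{4} for very small component counts), a short case check gives $\lfloor\frac{c+1}{2}\rfloor\le\lceil\frac{n}{4}\rceil+2$ only when $c$ is small. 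So in fact the disconnected case is NOT immediately subsumed, and I would instead split on the size of the largest component: if $c\le\lceil\frac{n}{2}\rceil+1$ then Lemma~\ref{C} finishes it, while if $c\ge\lceil\frac{n}{2}\rceil+2$ one treats the largest component essentially as a connected subfamily on $\le n$ vertices with $\ge\lceil\frac{c-1}{2}\rceil$ triples and applies the connected analysis to it, the remaining components contributing nothing positive because $\sum d(C_i)\le 0$. This mirrors exactly the bookkeeping already used in the proof of Lemma~\ref{6}.

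For the connected case, the strategy is to build an ordering $\pi$ greedily in two phases. In the first phase, as long as the union is small, we are allowed to be wasteful; in the second phase, once $\left|\bigcup_{i=1}^t X_{\pi(i)}\right|$ has grown past roughly $\frac{n}{2}$, we must add at most one new vertex per step. The key quantitative claim I would isolate is: once $t$ triples have been ordered with $\left|\bigcup_{i=1}^t X_{\pi(i)}\right|=u$, the number of remaining triples is $n-t$ and they together span the remaining $\le n-u$ vertices, so by a counting argument (each triple not fully inside the current union covers at least one new vertex, and a connected family on $v$ new vertices needs many edges) one can always continue so that $\Delta$ increases by at most a controlled amount. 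Concretely I expect the argument to pivot on Lemma~\ref{6}: suppose for contradiction $\Delta(\mathcal{X})\ge\lceil\frac{n}{4}\rceil+3$; take an ordering and the first index $k$ where $\Delta(\mathcal{X},\pi,k)=\lceil\frac{n}{4}\rceil+1$; Lemma~\ref{6} pins down $k=\lceil\frac{n}{4}\rceil+\varepsilon$ and $|M|=\lceil\frac{n}{2}\rceil+\varepsilon$ exactly. Then one must climb two more, from $\lceil\frac{n}{4}\rceil+1$ to $\lceil\frac{n}{4}\rceil+3$, and I would show this forces too many vertices: each further step that raises $\Delta$ uses up $\ge 2$ new vertices (Lemma~\ref{2}), so reaching $+3$ needs at least $\lceil\frac{n}{2}\rceil+\varepsilon+4$ vertices at some stage, which for $n\ge 4$ and the right $\varepsilon$ range exceeds $n$ — contradiction. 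The delicate point is that one does not get to choose $\pi$ in this contradiction argument, so one needs that EVERY ordering reaching $+3$ must pass through a state violating $|V|\le n$; that is where one invokes the monotonicity $-1\le\Delta(\cdot,k+1)-\Delta(\cdot,k)\le 2$ together with Lemma~\ref{2} to track vertex consumption.

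The main obstacle I anticipate is the bookkeeping around $\varepsilon$ and the ceilings: Lemma~\ref{6} gives $k=\lceil\frac{n}{4}\rceil+\varepsilon$ and $|M|=\lceil\frac{n}{2}\rceil+\varepsilon$, and then going from $\Delta=\lceil\frac{n}{4}\rceil+1$ up to $\lceil\frac{n}{4}\rceil+3$ requires at least two ``$+1$'' jumps, each consuming $\ge 2$ fresh vertices, plus at least two more triples, so one lands at $\ge k+2$ triples spanning $\ge |M|+4 = \lceil\frac{n}{2}\rceil+\varepsilon+4$ vertices. For this to contradict $|V|\le n$ we need $\lceil\frac{n}{2}\rceil+\varepsilon+4>n$, i.e. $\varepsilon>n-\lceil\frac{n}{2}\rceil-4=\lfloor\frac{n}{2}\rfloor-4$, which fails for large $n$ unless $\varepsilon$ is large; so the crude ``two jumps'' bound is not enough and one must instead track the jumps all the way: going from $\Delta=2$ (at $t=1$) up to $\lceil\frac{n}{4}\rceil+3$ requires $\lceil\frac{n}{4}\rceil+1$ jumps of $+1$, each eating $\ge2$ new vertices on top of the $3$ from the first triple, forcing $|V|\ge 3+2(\lceil\frac{n}{4}\rceil+1)\approx\frac{n}{2}+5$ — still not a contradiction with $n$ by itself. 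The real content, then, must be a more global count: once the union reaches $\approx\frac n2$, the number of triples used is only $\approx\frac n4$, but the $\approx\frac n2$ remaining vertices must be covered by the $\approx\frac{3n}4$ remaining triples, and connectivity plus the budget of $n$ total triples forces enough of those remaining triples to be ``internal'' (contributing $-1$ each to $\Delta$) that $\Delta$ cannot rise by more than $1$ further. Formalizing that ``enough internal triples'' estimate — essentially an edge-counting inequality on the subgraph induced by the not-yet-covered vertices, of the flavor of Lemma~\ref{2b} — is the step I expect to require the most care, and it is where the constant $\frac14$ (rather than something larger) comes from.
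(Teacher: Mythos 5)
Your proposal sets up the right frame --- assume $\Delta(\mathcal{X})>\lceil\frac n4\rceil+2$, locate the first step $k$ at which $\Delta$ reaches $\lceil\frac n4\rceil+1$, and use Lemma~\ref{6} to pin down $k=\lceil\frac n4\rceil+\varepsilon$ and $|M|=\lceil\frac n2\rceil+\varepsilon$ --- but it stops exactly where the proof has to begin. You candidly observe that both of your counting attempts (two further $+1$ jumps consuming $\ge 2$ fresh vertices each, and tracking all jumps from $\Delta=2$ upward) fail to contradict the bound of $n$ vertices, and you then defer the argument to an unspecified ``enough internal triples'' estimate on the subgraph of uncovered vertices. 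That estimate is the entire content of the theorem and is not supplied; moreover, the paper does not prove the theorem by any such edge count, so the gap is not one of writing out a routine inequality but of a missing idea.

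What the paper actually does: it chooses the pair $(\pi,k)$ extremally, with $k$ largest such that, in addition, \emph{every} ordering $\tau$ must exceed $\lceil\frac n4\rceil+1$ by step $k+1$. This maximality over all orderings is the lever your outline lacks --- your $k$ is just the first bad index of one ordering, and you correctly note that you ``do not get to choose $\pi$,'' but you never extract anything from that. The paper then partitions the $n-k$ remaining triples into $\mathcal{A}_2$ and $\mathcal{A}_3$ according to whether they meet $\overline{M}$ in $2$ or $3$ points and splits on $a_2$ versus $|\overline{M}|$. If $a_2\ge|\overline{M}|$, the traces $X\cap\overline{M}$ form a family of $2$-sets with at least as many members as vertices, so Theorem~\ref{3} (the statement $f_2(n,m)=1$ for $m\ge n-1$) lets one order $\mathcal{A}_2$ with total rise at most $1$, and Lemma~\ref{4} absorbs $\mathcal{A}_3$; this caps $\Delta$ at $\lceil\frac n4\rceil+2$. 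If $a_2<|\overline{M}|$, one reorders with $\mathcal{A}_3$ first and uses Lemma~\ref{5} to keep that prefix at most $\lceil\frac n4\rceil+1$; the maximality property of $k$ then forces $a_2+a_3\le k$, hence $k\ge\lceil\frac n2\rceil$ and $|\overline{M}|\le\lfloor\frac n4\rfloor$, after which the whole reordering stays at most $\lceil\frac n4\rceil+1$, a contradiction. None of this two-case structure, the use of the $f_2$ result, or the bootstrap to $k\ge\lceil\frac n2\rceil$ appears in your proposal. (Your worry about the disconnected case is legitimate but is already absorbed into Lemma~\ref{6}, whose proof handles disconnected $\mathcal{X}$ via Lemma~\ref{C}.)
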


\begin{proof}
Let $\mathcal{X}=\{X_{1},...,X_{n}\}$ \ be such that $\Delta (\mathcal{X}%
)=f(n)$ and, see Lemma \ref{1}, $\left\vert X_{i}\right\vert =3$ for all $%
i\in \lbrack n].$ Assume by contradiction that $\Delta (\mathcal{X}%
)>\left\lceil \frac{n}{4}\right\rceil +2.$ We choose an ordering $\pi $ of $%
\mathcal{X}$ and a number $k\in \lbrack n]$ so that $k$ is the largest
number \ with the property (a) $\Delta (\mathcal{X},\pi ,k)=\left\lceil 
\frac{n}{4}\right\rceil +1,$ and $\Delta (\mathcal{X},\pi ,s)\leq
\left\lceil \frac{n}{4}\right\rceil +1$ for all $s\leq k-1,$ (b) for all
orderings $\tau $ of $\mathcal{X}$ there is $s_{\tau }\leq k+1$ such that $%
\Delta (\mathcal{X},\tau ,s_{\tau })>\left\lceil \frac{n}{4}\right\rceil +1.$
From (a ) we have that $\Delta (\mathcal{X},\pi ,k+1)>\left\lceil \frac{n}{4}%
\right\rceil $ $+1,$ and by Lemma \ref{2}, $\ $%
\begin{equation*}
\left\vert X_{\pi (t)}-\dbigcup\limits_{i=1}^{k}X_{\pi (i)}\right\vert \geq 2%
\text{ for all }t>k.
\end{equation*}%
Set $M=\dbigcup\limits_{i=1}^{k}X_{\pi (i)}.$ Let $\overline{M}$ denote the
complement to $M$ with respect to the underlying set. By Lemma \ref{3}, $%
k\geq \left\lceil \frac{n}{4}\right\rceil +\varepsilon ,$ and $\left\vert 
\overline{M}\right\vert =n-(\left\lceil \frac{n}{2}\right\rceil +\varepsilon
)=\left\lfloor \frac{n}{2}\right\rfloor -\varepsilon ,$ where $\varepsilon
\geq 0.$ Further, for $i=2,3,$ let $\mathcal{A}_{i},$ $a_{i}=\left\vert 
\mathcal{A}_{i}\right\vert $, be the subfamily of $\mathcal{X}$, such that $%
X\in \mathcal{A}_{i}$ if $\left\vert X\cap \overline{M}\right\vert =i.$ We
note that $\mathcal{A}_{2}\cup \mathcal{A}_{3}$ comprises $n-k$ sets of $%
\mathcal{X}$ that come in the ordering $\pi $ after $X_{\pi (k)}.$ We choose 
$\pi $ so that all sets in $\mathcal{A}_{2}$ come in the ordering $\pi $
before sets from $\mathcal{A}_{3}.$ Two cases are considered.$\bigskip $

First, let $a_{2}=\left\vert \overline{M}\right\vert +\alpha ,$ where $%
\alpha \geq 0.$ Let $\mathcal{B}_{2}$ be a family of $2$-sets, $\mathcal{B}%
_{2}=\{$ $B;\,$ $B=\overline{M}\cap X$ for some $X\in \mathcal{A}_{2}\}.$
Then, by Lemma \ref{2a}(a), for each $t,$ $k$ $+1\leq t\leq k+a_{2},$ we
have $\Delta (\mathcal{X},\pi ,t)=\Delta (\mathcal{X},\pi ,k)+\Delta (%
\mathcal{B}_{2},\pi ^{\prime },t-k)=\left\lceil \frac{n}{4}\right\rceil
+1+\Delta (\mathcal{B}_{2},\pi ^{\prime },t-k),$ where $\pi ^{\prime }$ is
the restriction of $\pi $ to the set $\{k+1,...,k+a_{2}\}.$ By Theorem \ref%
{3}, for $m\geq n,$ $f_{2}(n,m)=1.$ Hence, if we choose $\pi ^{\prime }$ to
be the same permutation as in the proof of Theorem \ref{3}, then $\Delta (%
\mathcal{B}_{2},\pi ^{\prime },t-k)\leq 1$ for all $1\leq t-k\leq a_{2}.$
Thus, $\Delta (\mathcal{X},\pi ,s)\leq \left\lceil \frac{n}{4}\right\rceil
+2 $ for all $k+1\leq s\leq k+a_{2}.$ Let $B_{2}=\dbigcup\limits_{B\in 
\mathcal{B}_{2}}B.$ By Lemma \ref{2a}(a), $\Delta (\mathcal{X},\pi
,k+a_{2})=\Delta (\mathcal{X},\pi ,k)+\Delta (\mathcal{B}_{2},\pi ^{\prime
},a_{2})=\left\lceil \frac{n}{4}\right\rceil +1+(\left\vert B_{2}\right\vert
-a_{2}).$ To finish the proof of this part we will show that $\Delta (X,\pi
,s)\leq \left\lceil \frac{n}{4}\right\rceil +2$ is true also for all $s,$ $%
k+a_{2}+1\leq s\leq n.$ Again by Lemma \ref{2a}(a), 
\begin{equation*}
\Delta (\mathcal{X},\pi ,s)=\Delta (\mathcal{X},\pi ,k+a_{2})+\Delta (%
\mathcal{A}_{3}^{\ast },\pi ^{\ast },s-(k+a_{2}))
\end{equation*}%
where $\mathcal{A}_{3}^{\ast }$ consists of sets $X\cap (\overline{M}%
-B_{2}), $ $X\in \mathcal{A}_{3},$ and $\pi ^{\ast }$ is the restriction of $%
\pi $ to $\mathcal{A}_{3}^{\ast }.$ By Lemma \ref{4}, for any $%
s-(k+a_{2})\leq a_{3}$ we have $\Delta (\mathcal{A}_{3}^{\ast },\pi ^{\ast
},s-(k+a_{2}))\leq \frac{2}{3}(\left\vert \overline{M}\right\vert
-\left\vert B_{2}\right\vert ).$ Hence $\Delta (\mathcal{X},\pi
,s)=\left\lceil \frac{n}{4}\right\rceil +1+(\left\vert B_{2}\right\vert
-\left\vert \overline{M}\right\vert -\alpha )+\frac{2}{3}(\left\vert 
\overline{M}\right\vert -\left\vert B_{2}\right\vert )\leq \left\lceil \frac{%
n}{4}\right\rceil +1$ as $B_{2}\subset \overline{M},$ i.e., $\left\vert
B_{2}\right\vert \leq \left\vert \overline{M}\right\vert $ and $\alpha \geq
0.$\medskip

We are left with the case $a_{2}=\left\vert \overline{M}\right\vert -\alpha
, $ where $\alpha >0.$ We consider an ordering $\tau ,$ where the triples in 
$\mathcal{A}_{3}$ come at the very beginning of this ordering, followed by
triples from $\mathcal{A}_{2}$. At the very end of the ordering come triples
in $\mathcal{X-(A}_{2}\cup \mathcal{A}_{3}\mathcal{)}$ in the same order as
in the ordering $\pi .$ As $\left\vert \overline{M}\right\vert =\left\lfloor 
\frac{n}{2}\right\rfloor -\varepsilon ,$ and $\alpha >0,$ we have $%
a_{3}=n-k-a_{2}=n-(\left\lceil \frac{n}{4}\right\rceil +\varepsilon
)-(\left\vert \overline{M}\right\vert -\alpha )\geq \left\lfloor \frac{3n}{4}%
\right\rfloor -\varepsilon -\left\lfloor \frac{n}{2}\right\rfloor
+\varepsilon -\alpha \geq \left\lfloor \frac{n}{4}\right\rfloor \geq
\left\lceil \frac{\left\vert \overline{M}\right\vert -1}{2}\right\rceil .$
Therefore, by Lemma \ref{5}, for all $s\leq a_{3},$ we have $\Delta (%
\mathcal{X},\tau ,s)\leq \left\lfloor \frac{\left\vert \overline{M}%
\right\vert +1}{2}\right\rfloor \leq \left\lceil \frac{n}{4}\right\rceil +1.$
Let $B_{3}=:\dbigcup\limits_{X\in \mathcal{A}_{3}}X.$ Then, $\Delta (%
\mathcal{X},\tau ,a_{3})=\left\vert B_{3}\right\vert -a_{3}.$ We get, by
Lemma \ref{2a}(b), 
\begin{equation*}
\max_{1\leq s\leq a_{2}+a_{3}}\Delta (\mathcal{X},\tau ,s)=\max
\{\max_{1\leq s\leq a_{3}}\Delta (\mathcal{X},\tau ,s),\max_{a_{3}+1\leq
s\leq a_{2}+a_{3}}\Delta (\mathcal{X},\tau ,s)\}\leq
\end{equation*}

\begin{equation*}
\max \{\left\lceil \frac{n}{4}\right\rceil +1,\Delta (\mathcal{X},\tau
,a_{3})+\max_{1\leq s\leq a_{2}}\Delta (\mathcal{A}_{2}^{\ast },\tau
^{\prime },s)\},
\end{equation*}%
where $\tau ^{\prime }$ is the restriction of $\tau $ to $%
\{a_{3}+1,...,a_{3}+a_{2}\}$ and $\mathcal{A}_{2}^{\ast }$ comprises sets $%
X-B_{3},X\in \mathcal{A}_{2}$. As $\left\vert X\cap M\right\vert =1$ for all 
$X\in \mathcal{A}_{2}$ we further get 
\begin{eqnarray*}
\max_{1\leq s\leq a_{2}+a_{3}}\Delta (\mathcal{X},\tau ,s) &\leq &\max
\{\left\lceil \frac{n}{4}\right\rceil +1,\left\vert B_{3}\right\vert
-a_{3}+\max_{1\leq s\leq a_{2}}\left\vert \dbigcup\limits_{i=1}^{s}X_{\tau
(i)}\cap (\overline{M}-B_{3})\right\vert \} \\
&\leq &\max \{\left\lceil \frac{n}{4}\right\rceil +1,\left\vert
B_{3}\right\vert -a_{3}+\left\vert \overline{M}-B_{3}\right\vert \}\leq \max
\{\left\lceil \frac{n}{4}\right\rceil +1,\left\vert \overline{M}\right\vert
-a_{3}\},
\end{eqnarray*}

since $B_{3}\subset \overline{M}$. Finally, because $a_{3}\geq \left\lceil 
\frac{\left\vert \overline{M}\right\vert -1}{2}\right\rceil $ and $%
\left\vert \overline{M}\right\vert =\left\lfloor \frac{n}{2}\right\rfloor
-\varepsilon ,$ we get 
\begin{equation*}
\max_{1\leq s\leq a_{2}+a_{3}}\Delta (\mathcal{X},\tau ,s)\leq \left\lceil 
\frac{n}{4}\right\rceil +1
\end{equation*}%
$\bigskip $

Therefore, by the part (b) of definition of the value of $k$ and the
permutation $\pi ,$ we have $a_{2}+a_{3}\leq k$. Since $k=n-a_{2}-a_{3},$ we
get $k\geq \left\lceil \frac{n}{2}\right\rceil .$ Hence $k=\left\lceil \frac{%
n}{2}\right\rceil +\varepsilon ^{\prime }$ for some $\varepsilon ^{\prime
}\geq 0,$ and $\left\vert M\right\vert =\left\vert
\dbigcup\limits_{i=1}^{k}X_{\pi (i)}\right\vert =k+\Delta (\mathcal{X},\pi
,k)=\left\lceil \frac{n}{2}\right\rceil +\varepsilon ^{\prime }+\left\lceil 
\frac{n}{4}\right\rceil +1\geq \left\lceil \frac{3n}{4}\right\rceil
+\varepsilon ^{\prime }.$

We have $a_{2}=\left\vert \overline{M}\right\vert -\alpha ,\alpha >0.$
Therefore,%
\begin{equation*}
a_{3}>\left\vert \overline{M}\right\vert .
\end{equation*}%
Indeed, $a_{3}=n-k-a_{2}\geq n-\left\lceil \frac{n}{2}\right\rceil
-\varepsilon ^{\prime }-\left\vert \overline{M}\right\vert +\alpha \geq
\left\lfloor \frac{n}{2}\right\rfloor -\varepsilon ^{\prime }-\left\lfloor 
\frac{n}{4}\right\rfloor +\varepsilon ^{\prime }+\alpha \geq \left\lceil 
\frac{n}{4}\right\rceil >\left\vert \overline{M}\right\vert .$

Thus, as $a_{3}\geq \left\vert \overline{M}\right\vert \geq \left\vert
B_{3}\right\vert ,$ by Lemma \ref{5} , for each $s\leq a_{3,}$ $\Delta (%
\mathcal{X},\tau ,s)\leq \left\lfloor \frac{\left\vert \overline{M}%
\right\vert +1}{2}\right\rfloor \leq \left\lceil \frac{n}{4}\right\rceil +1.$
We note that $\Delta (\mathcal{X},\tau ,a_{3})=\left\vert B_{3}\right\vert
-a_{3}<0.$ Further, by Lemma \ref{2a}(b), we get

\begin{equation*}
\max_{1\leq s\leq a_{2}+a_{3}}\Delta (\mathcal{X},\tau ,s)=\max
\{\max_{1\leq s\leq a_{3}}\Delta (\mathcal{X},\tau ,s),\max_{a_{3}+1\leq
s\leq a_{2}+a_{3}}\Delta (\mathcal{X},\tau ,s)\}\leq
\end{equation*}

\begin{equation*}
\max \{\left\lceil \frac{n}{4}\right\rceil +1,\Delta (\mathcal{X},\tau
,a_{3})+\max_{1\leq s\leq a_{2}}\Delta (\mathcal{A}_{2}^{\ast },\tau
^{\prime },s)\}\leq
\end{equation*}%
\begin{equation*}
\max \{\left\lceil \frac{n}{4}\right\rceil +1,0+\max_{1\leq s\leq
a_{2}}\left\vert \dbigcup\limits_{i=1}^{s}X_{\tau (i)}\cap (\overline{M}%
-B_{3})\right\vert \}\leq
\end{equation*}

\begin{equation*}
\max \{\left\lceil \frac{n}{4}\right\rceil +1,\left\vert \overline{M}%
-B_{3}\right\vert \}\leq \left\lceil \frac{n}{4}\right\rceil +1
\end{equation*}
because $a_{3}\geq \left\vert \overline{M}\right\vert $ and $B_{3}\subset 
\overline{M}.$

Clearly, $\Delta (\mathcal{X},\tau ,a_{2}+a_{3})=\left\vert \overline{M}%
\right\vert +a_{2}-(a_{2}+a_{3})<0.$ For $t=s+a_{2}+a_{3},s\leq k,$ we get
by Lemma \ref{2a}(a), 
\begin{eqnarray*}
\Delta (\mathcal{X},\tau ,t) &=&\Delta (\mathcal{X},\tau
,a_{2}+a_{3})+\left\vert \dbigcup\limits_{i=a_{2}+a_{3}+1}^{t}X_{\tau
(i)}-\dbigcup\limits_{i=1}^{a_{2}+a_{3}}X_{\tau (i)}\right\vert -s\,< \\
\left\vert \dbigcup\limits_{i=a_{2}+a_{3}+1}^{t}X_{\tau (i)}\cap
M\right\vert -s &=&\left\vert \dbigcup\limits_{i=1}^{s}X_{\pi
(i)}\right\vert -s=\Delta (\mathcal{X},\pi ,s)\leq \left\lceil \frac{n}{4}%
\right\rceil +1.
\end{eqnarray*}

We recall that triples not in $\mathcal{A}_{2}\cup \mathcal{A}_{3}$ are in $%
\tau $ ordered the same way as in $\pi $ and $X_{\pi (s)}\subset M$ for all $%
s\leq k.$ We proved that $\Delta (\mathcal{X},\pi ,s)\leq \left\lceil \frac{n%
}{4}\right\rceil +1$ for all $1\leq s\leq n,$ which contradicts that $\Delta
(\mathcal{X})>\left\lceil \frac{n}{4}\right\rceil +2.$ The proof is complete.
\end{proof}

\paragraph{Acknowledgements.}

The authors are indebted to Noga Alon for discussions on expanders and on 
probabilistic methods, which lead to an improvement of the lower bound. 
This research of the  first author was initiated thanks to the University of
Washington  -- University of Bergen exchange program.

\end{document}